
\documentclass[12pt]{article}

\usepackage{graphicx, amssymb, latexsym, amsfonts, amsmath, lscape, amscd,
amsthm, color, epsfig, mathrsfs, tikz, enumerate}
\usepackage[normalem]{ulem} 

\usepackage{enumitem}
\newlist{steps}{enumerate}{1}
\setlist[steps, 1]{leftmargin=1.5cm, label = Step \arabic*}

\setlength{\topmargin}{-1.5cm}
\setlength{\textheight}{23cm} 
\setlength{\textwidth}{16cm}    
\setlength{\oddsidemargin}{0cm} 
\setlength{\evensidemargin}{0cm} 

\vfuzz2pt 
\hfuzz2pt 
\newtheorem{theorem}{Theorem}[section]

\newtheorem{corollary}[theorem]{Corollary}

\newtheorem{lemma}[theorem]{Lemma}

\newtheorem{property}[theorem]{Property}

\newtheorem{algorithm}[theorem]{Algorithm}
\newtheorem{observation}[theorem]{Observation}

\newtheorem{claim}{Claim}

\newcommand\DELETE[1]{}


\begin{document}


\title{{\bf Maximum Directed Linear Arrangement}}
\author{
{\sc Matt DeVos}$\,^{a}$, {\sc Kathryn Nurse}$\,^{b}$\\
\mbox{}\\
{\small $(a)$ Simon Fraser University, Burnaby Canada}\\
{\small $(b)$ \'Ecole Normale Sup\'erieure, Paris France}
}

\date{\today}

\maketitle

\begin{abstract}
We present a new problem on digraphs, Maximum Directed Linear Arrangement (MaxDLA). This is a directed and maximization variant of the well-studied Minimum Linear Arrangement (MinLA) problem. We relate MaxDLA to the Maximum Directed Cut (MaxDiCut) problem by bounding each in terms of the other. We prove that both MaxDiCut and MaxDLA are NP-Hard for planar digraphs. By contrast, the undirected Maximum Cut problem is known to be polynomial-time solvable on planar graphs. We present a polynomial-time algorithm for MaxDLA on orientations of bounded-degree trees, and, as a by-product, a polynomial-time algorithm for MinLA on graphs $G$ when $\overline{G}$ is a bounded-degree tree. This complements the known fact that MinLA is polynomial-time solvable on trees. Finally, motivated by Harper's celebrated isoperimetric inequality for hypercubes, we prove that tournaments, transitive acyclic digraphs, and orientations of graphs with maximum degree at most two have an arrangement so that every cut is maximum for its position in the arrangement.
\end{abstract}

\section{Introduction}

Except where otherwise noted, we shall assume that all directed graphs (digraphs) are {simple} (i.e. do not have loops or parallel edges) but may have opposite edges (an edge $(u,v)$ is opposite $(v,u)$). We refer the reader to \cite{Diestel3rdEdition} for terms not defined here. A digraph is \emph{symmetric} if every edge has an opposite, and in this case we consider it a (undirected) graph.

Let $D$ be an $n$-vertex digraph. A (\emph{linear}) \emph{arrangement} $\pi$ of $D$ is a bijection from $V(D)$ to $\{1,\ldots, n\}$. We will write $\pi = (v_1, v_2, \ldots, v_n)$ to indicate that $\pi(v_i) = i$ for every $1 \le i \le n$. Given an arrangement $\pi$ of a digraph $D$, the \emph{value of an edge} $e=(u,v)$ of $D$ is defined to be 
$val_{\pi}(e) = \max\{ 0, \pi(v) - \pi(u) \}$.  The \emph{value of $\pi$} is defined as $val_D(\pi) = \sum_{e \in E(D)} val_{\pi}(e)$.  The Maximum Directed Linear Arrangement Problem can now be stated as follows.

\medskip

\textsc{Maximum Directed Linear Arrangement (MaxDLA)}

\textbf{Input:} $n$-vertex digraph $D$, integer $k$

\textbf{Problem:} Is there an arrangement $\pi$ of $D$ with value at least $k$?

\medskip

We denote by $maxDLA(D)$ the largest such $k$.
MaxDLA is related to two well-studied problems: Minimum Linear Arrangement, and Simple Maximum Directed Cut. 

The Minimum Linear Arrangement problem (MinLA) asks, given an undirected graph $G$ and integer $k$, if there is an arrangement of $G$ with value at most $k$. MinLA is NP-complete in general \cite[GT42]{garey76}, even for bipartite \cite{shiloach79}, and interval \cite{cohen06} graphs, but is polynomial-time solvable for many classes of graphs such as trees \cite{goldklip, shiloach79, chung84}, and unit interval graphs \cite{Jinjiang1995,safro02}. See \cite{diaz02,petit11} for a survey. 

The relation between MinLA and MaxDLA is straightforward. Given a digraph $D$, its \emph{complement} $\overline{D}$ is the digraph with vertex set $V(D)$ and edge set $\overline{E(D)}$. Observe that, for an arrangement $\pi$ of an $n$-vertex digraph $D$, $val_D(\pi)$ is maximum exactly when $val_{\overline{D}}(\pi)$ is minimum. Indeed,  $val_D(\pi) + val_{\overline{D}}(\pi) = \binom{n+1}{3}$, which is the value of an arrangement of an $n$-vertex complete digraph \cite[p. 244]{garey76}. Theorem 1.1 follows.

\begin{theorem}
MaxDLA is NP-Complete.
\end{theorem}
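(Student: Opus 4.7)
The plan is to assemble the reduction the paper has essentially set up in the paragraph preceding the theorem. Membership of MaxDLA in NP is immediate: a linear arrangement $\pi$ is a polynomial-size certificate, and $val_D(\pi)$ can be computed in $O(|E(D)|)$ time by summing the contribution of each edge.

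For NP-hardness, I would reduce from MinLA on undirected graphs, which is NP-complete by \cite{garey76}. Given an instance $(G,k)$ of MinLA with $G$ an $n$-vertex undirected graph, view $G$ as a symmetric digraph by replacing each undirected edge $\{u,v\}$ with both arcs $(u,v)$ and $(v,u)$, and let $H=\overline{G}$ be its directed complement. Note that $H$ is also symmetric. The key bridge between the directed and undirected settings is that for any symmetric digraph, the MaxDLA-value of an arrangement $\pi$ equals the MinLA-value of the underlying undirected graph: each unordered pair $\{u,v\}$ of adjacent vertices contributes $\max\{0,\pi(v)-\pi(u)\}+\max\{0,\pi(u)-\pi(v)\}=|\pi(u)-\pi(v)|$ via its two opposite arcs.

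Combining this with the identity $val_G(\pi)+val_{\overline{G}}(\pi)=C$ asserted in the excerpt (with $C$ a constant depending only on $n$, equal to the common value of every arrangement of the complete symmetric digraph), one obtains the equivalence: $G$ admits an arrangement of MinLA-value at most $k$ if and only if $H$ admits an arrangement of MaxDLA-value at least $C-k$. Since $H$ and $C-k$ are polynomial-time computable from $(G,k)$, this is a polynomial-time many-one reduction from MinLA to MaxDLA.

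There is no real obstacle; the only care needed is bookkeeping around the directed/undirected conversion and confirming that the identity, stated by the authors for a general digraph, specializes correctly to symmetric digraphs. The theorem is essentially a direct consequence of the max/min duality via complements already established.
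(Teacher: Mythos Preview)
Your proposal is correct and follows essentially the same approach as the paper: membership in NP via the arrangement certificate, and hardness by reducing MinLA to MaxDLA through the complement identity $val_D(\pi)+val_{\overline D}(\pi)=C$. The paper states this tersely in the paragraph preceding the theorem (treating symmetric digraphs as undirected graphs by convention), while you spell out the symmetric-digraph/undirected-graph correspondence explicitly; the substance is the same.
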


As stated previously, MinLA is solvable in polynomial time for trees. This was first shown by Goldberg and Klipker in 1976 with an $O(n^3)$ algorithm \cite{goldklip}. And improved by Shiloach in 1979, with an interesting $O(n^{2.2})$ algorithm \cite{shiloach79}. Finally, in 1984, Chung improved Shiloach's algorithm and, with a careful analysis of running time, gave a $O(n^\lambda)$ algorithm where $\lambda \approx 1.6$ \cite{chung84} which is the current best. In Section \ref{ortrees} of this paper, we present a polynomial-time algorithm solving MinLA for the complements of trees with degree bounded by a constant. This was discovered first in the context of MaxDLA. The algorithm for MaxDLA, which is polynomial-time for orientations of trees with constant-bounded degree, occupies the bulk of the section. Hence, the main results of section 4 are the following Theorem and Corollary.

\begin{theorem}
Let $G$ be a forest on $n$ vertices with $\Delta (G) = d$ and let $D$ be an orientation of $G$. Then the MaxDLA of $D$ is solvable in time $O(n^{4d})$.
\end{theorem}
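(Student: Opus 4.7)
The plan is a bottom-up dynamic programming on the rooted tree. Since the arrangement value is additive over connected components, it suffices to design the algorithm for a single tree and then sum over the trees of the forest, so root the tree at an arbitrary vertex $r$ and process its subtrees in post-order.

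For each subtree $T_v$, I will compute a table $M_v[p]$ whose entries are indexed by a profile $p$ summarizing how the vertices of $T_v$ are embedded in the global arrangement; $M_v[p]$ will record the maximum contribution from edges internal to $T_v$ over all arrangements whose restriction to $T_v$ yields profile $p$. The profile must contain enough information to compute internal edge values and to combine $T_v$ with its siblings at the parent level. A natural candidate is the tuple consisting of the position $\pi(v)$ of $v$ together with, for each of $v$'s at-most-$d$ children $c_i$, the position of $c_i$ and the leftmost and rightmost positions occupied by $T_{c_i}$ in the arrangement. Each parameter lies in $\{1,\ldots,n\}$, yielding at most $n^{O(d)}$ profiles per subtree. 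The key observation making this work is that the value contributed by an edge of $T_v$, given the relative order of its endpoints, depends only on the number of vertices (of $T_v$ or foreign) lying between them; so knowing the spreads of the children's subtrees together with the root position is sufficient to parameterize the maximum.

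The transition at a vertex $v$ combines the DP tables of its children. For each target profile $p$ of $T_v$, we enumerate over consistent tuples of child profiles, compute the contribution of each edge between $v$ and a child $c_i$ from the positions recorded in the profiles, and maximize $\sum_i M_{c_i}[p_i]$ plus the boundary contribution, subject to the consistency of the combined placement. With $n^{O(d)}$ states per subtree and $O(d)$ children per vertex, this transition runs in $n^{O(d)}$ time per vertex; tuning the number of positional parameters to $4d$ yields the claimed $O(n^{4d})$ bound. The hardest step, I expect, is proving that a profile with $O(d)$ positional parameters really captures everything needed: one must show that the spread-and-root data suffices to (i) correctly compute the internal value of $T_v$ as a function that can be maximized over the unspecified details (the internal ordering of $T_v$'s vertices and the precise positions of foreign vertices within each span), and (ii) consistently combine with siblings without double-counting foreign vertices whose spans overlap. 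The bounded-degree hypothesis is essential precisely because it caps the number of interacting children whose interleaving must be tracked.
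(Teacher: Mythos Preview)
Your proposal has a genuine gap, precisely the one you flag at the end but do not resolve. Recording for each child $c_i$ only the position of $c_i$ and the leftmost and rightmost positions of $T_{c_i}$ does not suffice to merge siblings: the entry $M_{c_i}[p_i]$ is a maximum over \emph{all} injections of $V(T_{c_i})$ into $\{1,\dots,n\}$ consistent with $p_i$, and when two siblings have overlapping spans their individually maximizing placements may occupy the same slots, so $\sum_i M_{c_i}[p_i]$ plus the boundary edges can exceed anything realizable by a single injection of $V(T_v)$. Nothing in the profile enforces position-disjointness; repairing this would need either the full occupation pattern (exponential) or a structural lemma restricting how sibling subtrees interleave in optimal arrangements, and none is offered. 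A related slip appears at the start: arrangement value is \emph{not} additive over connected components (interleaving two disjoint forward edges gives value $4$, not $1+1$), so one cannot simply solve each tree separately and sum.

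The paper avoids positional bookkeeping altogether via a different idea. It passes to the weighted undirected problem using the level formula $l(v_i)=d^+(v_i)-|N(v_i)\cap S_i|$ and uses that in any maximal arrangement the vertices appear in nondecreasing order of level; hence a maximal signature is determined by the multiset of levels, and since each level lies in $\{-d,\dots,d\}$ there are only $O(n^{2d})$ such multisets. The algorithm is centroid-edge divide-and-conquer: delete an edge $\{u,v\}$, branch on which endpoint lies to the left (decrementing one weight, which by Observation~4.1 preserves every level), recurse on the two smaller forests, then union the returned signature sets and prune non-maximal elements. The $O(n^{4d})$ bound is simply the cost of pairwise comparison among $O(n^{2d})$ signatures. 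This also handles forests cleanly, since the level multiset of a disjoint union is the multiset union of the parts.
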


\begin{corollary}
Let $G$ be a graph on $n$ vertices whose complement is a forest of maximum degree $d$. Then MinLA of $G$ is solvable in time $O(n^{4d})$.
\end{corollary}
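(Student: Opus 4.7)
The plan is to reduce MinLA on $G$ to a MaxDLA computation on a digraph whose underlying graph is the forest $F := \overline{G}$, and then invoke Theorem 1.2. The reduction will use the complement identity already recalled in the excerpt, namely that $val_D(\pi) + val_{\overline{D}}(\pi)$ equals a constant $C(n)$ depending only on $n$ (the value of any arrangement of the complete digraph on $n$ vertices).

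First I would identify $G$ with the symmetric digraph $\vec{G}$ in which each undirected edge is replaced by a pair of opposite arcs. For any arrangement $\pi$, exactly one of the two arcs of each edge contributes $|\pi(u)-\pi(v)|$ while the other contributes $0$, so
\[
val_{\vec{G}}(\pi) \;=\; \sum_{uv \in E(G)} |\pi(u) - \pi(v)|,
\]
which is precisely the MinLA cost of $\pi$ on $G$. Setting $\vec{F}_{\mathrm{sym}} := \overline{\vec{G}}$ (the symmetric digraph on $F$), the complement identity gives $val_{\vec{G}}(\pi) + val_{\vec{F}_{\mathrm{sym}}}(\pi) = C(n)$ for every $\pi$, and therefore
\[
\mathrm{MinLA}(G) \;=\; C(n) \;-\; \mathrm{MaxDLA}(\vec{F}_{\mathrm{sym}}).
\]
This reduces the problem to a single MaxDLA computation on a digraph whose underlying graph is $F$.

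Next I would argue that this MaxDLA computation can be performed in $O(n^{4d})$ time by the algorithm behind Theorem 1.2. Although $\vec{F}_{\mathrm{sym}}$ is not an orientation of $F$ (each edge of $F$ appears in both directions), its underlying undirected graph is the forest $F$ with $\Delta(F) \leq d$. The algorithm proving Theorem 1.2 is a dynamic program on the tree structure of the underlying graph whose state size and transition cost depend on the maximum degree $d$ of that tree rather than on whether each edge carries one arc or two; running it on $\vec{F}_{\mathrm{sym}}$, where each edge of $F$ contributes $|\pi(u)-\pi(v)|$ (the combined value of its two arcs) to the local objective, will therefore solve MaxDLA in the same $O(n^{4d})$ time.

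The step requiring the most care is verifying that extending from orientations of $F$ to the symmetric digraph on $F$ does not inflate the running time. Since each arc of $\vec{F}_{\mathrm{sym}}$ still corresponds to a single edge of the underlying tree, the DP state does not grow and only the per-edge contribution changes, from $\max\{0,\pi(v)-\pi(u)\}$ to $|\pi(u)-\pi(v)|$. Once that is checked, the $O(n^{4d})$ bound of Theorem 1.2 carries over verbatim and yields the claimed running time for MinLA$(G)$.
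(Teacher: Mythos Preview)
Your proposal is correct and follows essentially the same route as the paper: reduce MinLA on $G$ to MaxDLA on the symmetric digraph $\vec{F}_{\mathrm{sym}}$ over the forest $F=\overline{G}$ via the complement identity, and then observe that the tree-based algorithm behind Theorem~1.2 applies with only a change in the per-edge contribution. The paper phrases that last modification concretely: in Step~4 of \textsc{FindMaximalSignatures} the weight is decreased by $2$ rather than $1$ when an edge is removed (reflecting that each undirected edge of $F$ carries two arcs in $\vec{F}_{\mathrm{sym}}$), and since vertex levels still lie in $\{-d,\ldots,d\}$ the $O(n^{2d})$ bound on the number of signatures, and hence the $O(n^{4d})$ running time, is preserved exactly as you argue.
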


The Simple Maximum Directed Cut problem (MaxDiCut) is also related to MaxDLA. Given a partition of the vertices of a digraph $D$ into two sets $S$ and $T$, the \emph{directed cut from $S$ to $T$}, written $E(S,T)$, is the set of edges with tail in $S$ and head in $T$. MaxDiCut asks, given a digraph $D$ and integer $k$, if there is a directed cut of $D$ containing at least $k$ edges. MaxDiCut is NP-complete in general \cite[pp. 244-246]{garey76}, and remains so for many symmetric graph classes. That is, the undirected version is NP-complete for chordal, tripartite, split \cite{Bodlaender91}, and unit-disk graphs \cite{Diaz2007}, but is solvable in polynomial time for cographs \cite{Bodlaender91} and planar graphs \cite{Hadlock1975}. The following theorem, proved at the beginning of Section 3, contrasts the last result. 

\begin{theorem}
MaxDiCut is NP-complete even when restricted to planar multi-digraphs of maximum degree 14.  
\end{theorem}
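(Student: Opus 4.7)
The plan is to give a polynomial-time many-one reduction from MaxDiCut on general digraphs, which is NP-complete, to MaxDiCut on planar multi-digraphs of maximum degree at most $15$. From a digraph $D$ and target $k$ I will produce a planar multi-digraph $D'$ with $\Delta(D')\le 15$ and an integer $k'$ such that $D$ has a directed cut of size at least $k$ if and only if $D'$ has one of size at least $k'$.

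The construction proceeds in two stages. First, to bound the degree, each high-degree vertex $v$ of $D$ is replaced by a \emph{cluster}: a small cycle of new vertices joined by multi-edges of very large multiplicity, so that every maximum directed cut of $D'$ places the entire cluster on a single side. The arcs originally incident to $v$ are then distributed among the cluster members so that every new vertex has bounded in- and out-degree. Second, to obtain planarity, I would draw the resulting bounded-degree digraph in the plane and, at each crossing of two arcs $(u,v)$ and $(x,y)$, insert a planar \emph{directed crossover gadget}: a small multi-digraph on external vertices $u,v,x,y$ with the property that there is a fixed constant $C$ (depending only on the gadget) such that the maximum number of gadget arcs in a directed cut extending any bipartition $(S,T)$ of $\{u,v,x,y\}$ equals
\[
C \;+\; \mathbf{1}\{u \in S,\, v \in T\} \;+\; \mathbf{1}\{x \in S,\, y \in T\}.
\]
Setting $k'$ to be $k$ plus the total contribution of all crossover and cluster gadgets then yields the desired equivalence between cuts of $D$ of value $\ge k$ and cuts of $D'$ of value $\ge k'$.

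The main obstacle is designing the directed crossover gadget, because the standard undirected MaxCut crossover of Garey--Johnson--Stockmeyer exploits a symmetry absent in the directed case. I would construct the gadget explicitly, using multi-edges of sufficiently large multiplicity to pin internal vertices to prescribed sides of any maximum cut, and then verify correctness by a finite case check over the sixteen bipartitions of $\{u,v,x,y\}$. The precise bound $\Delta(D')\le 15$ should emerge directly from the combined internal degrees of the crossover and cluster gadgets, with no further optimization attempted beyond what is needed to hit the stated constant.
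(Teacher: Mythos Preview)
Your route is quite different from the paper's. The paper does \emph{not} planarize a general instance; instead it reduces from \textsc{Planar Max2SAT} (known NP-complete), attaching a small constant-size gadget to the already-planar variable graph for each clause. Planarity is inherited for free, and the degree bound then comes by chaining through a bounded-occurrence version of \textsc{Planar Max2SAT}. No crossover gadget is ever needed.

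Your proposal, by contrast, has two genuine gaps. First, the cluster gadget you describe for degree reduction does not do what you claim. In \textsc{MaxDiCut}, a heavy arc $(a,b)$ rewards putting $a\in S$ and $b\in T$; a cycle of heavy (multi-)arcs therefore \emph{encourages} splitting the cluster across the cut rather than keeping it on one side. You need an equality-enforcing gadget, and for directed cuts that requires auxiliary vertices (for instance, for each consecutive pair add private vertices $z,z'$ with arcs $(w_i,z),(z,w_{i+1}),(w_{i+1},z'),(z',w_i)$), not just parallel arcs on a cycle. As written, the reduction is incorrect at this step.

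Second, and more seriously, the directed crossover gadget is the entire content of your argument and you have not built it. The undirected Garey--Johnson--Stockmeyer crossover exploits that subdividing an edge preserves its cut contribution up to a constant; this fails for directed cuts (for the path $u\to w\to v$ the optimal placement of $w$ yields $1-\mathbf{1}\{u\in T,v\in S\}$, not $\mathbf{1}\{u\in S,v\in T\}$), so you cannot simply route arcs around one another. Realizing the target profile $C+\mathbf{1}\{u\in S,v\in T\}+\mathbf{1}\{x\in S,y\in T\}$ with $u,x,v,y$ in that cyclic order on the outer face is a nontrivial design problem, and saying you ``would construct the gadget explicitly'' and then ``verify by a finite case check'' is a plan, not a proof. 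Until that gadget is exhibited, nothing in your argument establishes planarity, and the specific bound $15$ is likewise unsupported. The paper's Max2SAT route sidesteps all of this.
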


As a corollary we show that the same holds for MaxDLA.

\begin{corollary}
MaxDLA is NP-complete even when restricted to planar multi-digraphs of maximum degree 14.  \end{corollary}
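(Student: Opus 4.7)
MaxDLA is clearly in NP: given an arrangement $\pi$, the value $val_D(\pi)$ is polynomial to compute. For hardness on this restricted class, the plan is a polynomial-time many-one reduction from MaxDiCut on planar multi-digraphs of maximum degree $15$, which is NP-hard by Theorem~1.4. Given an instance $(D, k)$ with $D$ on $n$ vertices, I would form $D'$ from $D$ by adjoining $N$ isolated vertices (with $N$ a sufficiently large polynomial in $n$ to be calibrated) and ask whether $MaxDLA(D') \ge k(N+1)$. The digraph $D'$ is planar, has maximum degree at most $15$, and size polynomial in $n$; moreover $MaxDiCut(D') = MaxDiCut(D)$ because isolated vertices carry no edges.

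Correctness hinges on two inequalities that sharpen the abstract's bounds in the presence of this padding. For the lower bound, if $(S,T)$ achieves $MaxDiCut(D)$, I would use the explicit arrangement that places the vertices of $S$ in positions $1, \ldots, |S|$, then the $N$ isolated vertices, then the vertices of $T$ in the final positions. Every edge from $S$ to $T$ then has value at least $N+1$, every other edge has value at least $0$, so $MaxDLA(D') \ge (N+1)\,MaxDiCut(D)$. For the upper bound, for any arrangement $\pi$ of $D'$ one has the identity $val_{D'}(\pi) = \sum_{i=1}^{n+N-1} c_i(\pi)$, where $c_i(\pi)$ counts the edges with tail in the first $i$ positions and head strictly beyond; each $c_i(\pi)$ is the size of a directed cut, hence at most $MaxDiCut(D)$, giving $MaxDLA(D') \le (n+N-1)\,MaxDiCut(D)$.

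Combining these, $MaxDiCut(D) \ge k$ implies $MaxDLA(D') \ge k(N+1)$, while conversely $MaxDLA(D') \ge k(N+1)$ forces the integer $MaxDiCut(D)$ to be at least $k(N+1)/(n+N-1)$. The main obstacle is the calibration of $N$: the degree restriction gives $MaxDiCut(D) \le m = O(n)$, so $k = O(n)$, and a short manipulation shows that choosing $N$ of order $n^2$ makes $k(N+1)/(n+N-1) > k-1$, which since $MaxDiCut(D)$ is an integer forces $MaxDiCut(D) \ge k$. With $N$ so fixed the Karp reduction is complete; the remaining work is routine bookkeeping to write out the threshold calculation and verify that planarity and degree bound $15$ are trivially preserved by adjoining isolated vertices.
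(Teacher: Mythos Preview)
Your proposal is correct and follows essentially the same route as the paper: the paper's Theorem~3.5 reduces MaxDiCut to MaxDLA by padding with a polynomial number of isolated vertices, uses the explicit ``$S$, then padding, then $T$'' arrangement for the lower bound and the cut-sum identity $val(\pi)=\sum c_i$ for the upper bound, and then notes (Corollary~3.6) that this reduction preserves any digraph class closed under adding isolated vertices---in particular planar multi-digraphs of maximum degree $15$. Your only deviations are cosmetic calibrations (threshold $k(N+1)$ in place of the paper's $kn^3$, and $N$ of order $n^2$ rather than $n^3$, justified via the degree bound $k=O(n)$ instead of the paper's cruder $k<n^2$), which if anything are slightly tidier for the bounded-degree setting.
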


In Section 3 we also show a dependence between MaxDLA and MaxDiCut, whose bounds are best possible.

\begin{theorem}
For every digraph $D$, 
$$ \tfrac{n}{2} MaxDiCut(D) \leq MaxDLA(D) \leq (n-1)MaxDiCut(D). $$
\end{theorem}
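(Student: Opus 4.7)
The plan is to handle the two inequalities separately; each has a short argument that exposes its structural meaning.

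\textbf{Upper bound.} I would fix an arrangement $\pi = (v_1, \ldots, v_n)$ and, for each $1 \leq i \leq n-1$, consider the directed cut from $S_i = \{v_1, \ldots, v_i\}$ to $T_i = V(D) \setminus S_i$. An edge $e = (v_a, v_b)$ with $a < b$ lies in $E(S_i, T_i)$ for exactly the indices $i$ with $a \leq i < b$, i.e.\ for $b - a = val_\pi(e)$ values of $i$. Swapping the order of summation therefore gives the identity
\[
val_D(\pi) \;=\; \sum_{i=1}^{n-1} |E(S_i, T_i)|.
\]
Since each $|E(S_i, T_i)|$ is at most $MaxDiCut(D)$, the upper bound $val_D(\pi) \leq (n-1)\,MaxDiCut(D)$ follows. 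Applying this to the $\pi$ maximizing $val_D(\pi)$ yields the claim.

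\textbf{Lower bound.} Pick a directed cut $(S,T)$ with $|E(S,T)| = MaxDiCut(D)$ and let $s = |S|$, $t = |T|$. My plan is a one-line averaging argument: let $\pi$ be a uniformly random arrangement obtained by placing the vertices of $S$ uniformly at random into the positions $1, \ldots, s$ and the vertices of $T$ uniformly at random into the positions $s+1, \ldots, n$. Then $\mathbb{E}[\pi(u)] = (s+1)/2$ for each $u \in S$ and $\mathbb{E}[\pi(v)] = s + (t+1)/2$ for each $v \in T$, so for every edge $(u,v) \in E(S,T)$ one has $\mathbb{E}[\pi(v) - \pi(u)] = n/2$. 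Because all edge values are nonnegative, discarding the contribution of edges not in $E(S,T)$ gives
\[
\mathbb{E}[val_D(\pi)] \;\geq\; \sum_{(u,v) \in E(S,T)} \mathbb{E}[\pi(v) - \pi(u)] \;=\; \tfrac{n}{2}\,MaxDiCut(D),
\]
and hence at least one arrangement in the support attains this value.

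I do not anticipate a real obstacle: the upper bound is a double-counting identity and the lower bound is linearity of expectation over the $s!\,t!$ arrangements with $S$ before $T$. The only subtlety worth flagging is that both steps rely on $val_\pi(e) \geq 0$ (in the upper bound to recognize $val_\pi(e)$ as the number of cuts $e$ crosses from left to right, and in the lower bound to drop the non-cut edges from the expectation), but this is immediate from the definition.
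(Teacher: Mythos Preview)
Your proof is correct. The upper bound matches the paper exactly: both use the identity $val(\pi)=\sum_{i=1}^{n-1}c_i$ (Property~2.1 in the paper) and bound each term by $MaxDiCut(D)$.

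For the lower bound you take a genuinely different route. The paper constructs an explicit arrangement: it orders $X$ by non-increasing outdegree in the bipartite subgraph $D'$ and $Y$ by non-decreasing indegree, and asserts this has value at least $\tfrac{n}{2}t$. Your averaging argument is actually the cleanest way to \emph{justify} the paper's assertion: the paper's ordering is, by the rearrangement inequality, the best among all $X$-before-$Y$ arrangements, hence at least as good as the average $\tfrac{n}{2}t$ that you compute. So your probabilistic proof is self-contained and slightly more elementary, while the paper's buys an explicit witness arrangement at the cost of leaving the final inequality unverified in the text.
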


For any arrangement $(v_1, v_2, \ldots, v_n)$ of a directed graph $D$, there are $n-1$ associated directed cuts of the form $E( \{v_1, \ldots, v_k \}, \{v_{k+1}, \ldots, v_n \} )$ where $1 \le k \le n-1$.  We call these the \emph{cuts} of the arrangement.  

The last section of this paper examines digraphs for which the maximum directed linear arrangement has the property that every cut of the arrangement $E( \{v_1, \ldots, v_k \},\allowbreak \{v_{k+1}, \ldots, v_n \} )$ is a largest directed cut in $G$ of the form $E(X,Y)$ where $|X|=k$.   This is inspired by a famous edge isoperimetric inequality of Harper.  In 1966, Harper gave an arrangement of the vertices of a hypercube so that every cut of the arrangement was minimum over all cuts separating the same size vertex sets \cite{Harper1966}. In contrast, in Section \ref{sect:wonderful} we present three classes of digraphs, all with an arrangement where every cut is maximum over all cuts separating the same size vertex sets.  These digraphs are tournaments, orientations of graphs $G$ with $\Delta(G) \leq 2$, and transitive acyclic digraphs.

\section{Properties of Directed Linear Arrangements}\label{sect:properties}

In this section we establish some basic properties of Directed Linear Arrangements that will be helpful in our investigations.  

Fix an arrangement $\pi = (v_1, v_2, ..., v_n)$ of a digraph $D$ and define $S_i = \{v_j \in V(D) \mid j \leq i \}$ and $T_i = \{ v_j \in V(D) \mid j > i \}$.  Then the cuts $C_1, \ldots, C_{n-1}$ of the arrangement $\pi$ are defined by the rule that $C_i$ is the directed cut from $S_i$ to $T_i$.  We say that the arrangement $\pi$ \emph{contains} each $C_i$ and we let $c_i = |C_i|$. When needed, we write $c_i(\pi)$ to specify the arrangement $\pi$.
%
%

Observe that the value of an arrangement $\pi$ can be calculated by summing its cuts 
\begin{equation*}
val(\pi) = \sum_{i=1}^{n-1}c_i.
\end{equation*}

The \emph{level} of a vertex $v_i$ in an arrangement $\pi = (v_1, v_2, ..., v_n)$ can be thought of as its contribution to $c_i$, and is defined as follows 

\begin{equation*}
l_\pi(v_i) = c_{i} - c_{i-1}\text{,}
\end{equation*}
where we let $c_0$ = $c_n$ = 0.

The value of a cut $c_i$ is therefore the sum of the levels to its left. And so there are three equivalent ways to calculate the value of an arrangement, based on edges, cuts, and levels respectively.

\begin{property}\label{prop:calculate value using edges cuts levels}
The value of an arrangement $\pi = (v_1, v_2, ..., v_n)$ of a digraph $D=(V,E)$ is \begin{equation*}
val(\pi) = \sum_{e \in E(D)}val(e) = \sum_{i=1}^{n-1}c_i = \sum_{i=1}^{n-1}\sum_{j=1}^i l
_\pi(v_j).
\end{equation*}
\end{property}

Levels lead to a nice abstraction of MaxDLA. Let $\pi = (v_1, v_2, ..., v_n)$ be an arrangement of digraph $D$. Given vertex $v_i$, its level in $\pi$ is the number of its in-neighbours to the its left subtracted from the number of its out-neighbours to its right $l_\pi(v_i) = |N^+(v_i) \cap T_i| - |N^-(v_i) \cap S_i|$. This is just a direct application of the definition of directed cut. From this, by adding and subtracting $|N^+(v_i) \cap S_i|$, it follows that the level of a vertex is 
\begin{equation}\label{eqn: level calculation}
l_\pi(v_i) = d^+(v_i) - |N(v_i) \cap S_i|.
\end{equation}
Therefore the levels, and hence the value, of an arrangement $\pi$ can be calculated without knowing the direction of the edges of $D$, only the out-degree of each vertex. Hence we have the following property. 


\begin{property}
Let $G$ be a graph, and let $D$ and $D'$ be orientations of $G$ so that every vertex has the same outdegree in both $D$ and $D'$. If $\pi$ is an arrangement of $D$, then $val_D(\pi) = val_{D'}(\pi)$.
\end{property}


We say the \emph{signature} $s$ of an arrangement $\pi = (v_1, v_2, ..., v_n)$ is the $(n-1)$-tuple of the sizes of its cuts
\begin{equation}\label{eqn: signature of cuts}
s(\pi) = (c_1, c_2, ..., c_{n-1}).
\end{equation}
And we use the shorthand $s_i$ to mean the $i$\textsuperscript{th} element of $s$. Then $val\big(s(\pi)\big) = \sum_{i=1}^{n-1}c_i =  val(\pi)$.

For a digraph $D$, let $S$ be the set of signatures of $D$. That is, $s \in S$ whenever there is an arrangement $\pi$ of $D$ so that $s(\pi) = s$. For $s, s' \in S$ we write $s' \leq s$ whenever $s'_i \leq s_i$ for all $i$. This notation captures the idea of $s$ being no worse than $s'$ on every cut. It is easy to see that $S$ is a partial order under $\leq$. We say an arrangement $\pi$ is maximum (maximal) if $s(\pi)$ is maximum (maximal) in $S$.

\begin{observation}\label{obs: disconnected}
Let $D$ be a disconnected digraph and let $H$ be one of its components. An arrangement $\pi$ is a maximal (maximum) arrangement of $D$ only if $\pi$ restricted to $H$ is a maximal (maximum) arrangement of $H$.
\end{observation}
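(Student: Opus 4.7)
The plan is to argue by contrapositive: assuming $\pi|_H$ is not maximal (resp.\ not maximum) in the signature poset of $H$, I will construct an arrangement $\pi^*$ of $D$ witnessing that $\pi$ itself fails to be maximal (resp.\ maximum) in $D$.

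Let the vertices of $H$ occupy positions $i_1 < i_2 < \cdots < i_m$ in $\pi$, so that $\pi|_H$ is the sequence $(v_{i_1}, \ldots, v_{i_m})$. Given an alternative arrangement $\pi'_H = (u_1, \ldots, u_m)$ of $H$, I define $\pi^*$ by placing $u_j$ at position $i_j$ for each $j$ and leaving every non-$H$ vertex in its original position. The key observation is that since $H$ is a component of $D$, the underlying undirected graph has no edge between $V(H)$ and $V(D) \setminus V(H)$; in particular every cut $C_k$ of $\pi$ (or $\pi^*$) splits cleanly into an $H$-part and a non-$H$-part.

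For each $1 \le k \le n-1$, let $j(k)$ be the number of indices $l$ with $i_l \le k$. The non-$H$-part of $C_k$ depends only on the partition of the non-$H$ vertices induced by the cut, and by construction this is identical in $\pi$ and $\pi^*$. The $H$-part has size $s(\pi|_H)_{j(k)}$ in $\pi$ and $s(\pi'_H)_{j(k)}$ in $\pi^*$, with the convention $s_0 = s_m = 0$. Hence $|C_k(\pi^*)| - |C_k(\pi)| = s(\pi'_H)_{j(k)} - s(\pi|_H)_{j(k)}$ for every $k$.

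To finish: if $\pi|_H$ is not maximal in $S_H$, choose $\pi'_H$ with $s(\pi|_H) \le s(\pi'_H)$ and strict inequality at some index $j^*$; the displayed identity yields $s(\pi) \le s(\pi^*)$ with strict inequality at cut position $i_{j^*}$, contradicting maximality of $\pi$. If $\pi|_H$ is merely not maximum, pick $\pi'_H$ with $s(\pi'_H)_{j^*} > s(\pi|_H)_{j^*}$ at some index $j^*$; then $|C_{i_{j^*}}(\pi^*)| > |C_{i_{j^*}}(\pi)|$, so $s(\pi^*) \not\le s(\pi)$, contradicting that $s(\pi)$ is maximum. The proof is essentially bookkeeping, and the only potential snag is indexing the cuts correctly so that $C_{i_j}$ of $\pi$ and $\pi^*$ corresponds to the $j$-th cut of $\pi|_H$ and $\pi'_H$ respectively, which is exactly what the function $j(k)$ records.
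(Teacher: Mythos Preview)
Your argument is correct. The paper states this as an Observation without proof, and your contrapositive construction supplies exactly the bookkeeping the authors left implicit: since $H$ is a component, every cut of $D$ decomposes additively into an $H$-part and a non-$H$-part, so swapping $\pi|_H$ for a strictly dominating (respectively, somewhere-larger) arrangement of $H$ at the same positions produces a $\pi^*$ witnessing that $\pi$ fails to be maximal (respectively, maximum).
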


This motivates the following slight generalization of our problem. 
The Maximum Directed Linear Arrangement (MaxDLA) problem is contained in a problem on undirected graphs with vertex weights. We define the Weighted Maximum Linear Arrangement (W-MaxLA) problem, which asks, given an undirected graph $G$ with weight function $f: V(G) \rightarrow \mathbb{Z}$ and positive integer $k$, if there is a linear arrangement $\pi$ with value at least $k$. The value of the arrangement $\pi$ is calculated as in Property \ref{prop:calculate value using edges cuts levels} using levels. And the level of the $i^{th}$ vertex $v_i$ in the arrangement $\pi$ is defined as 
\begin{equation}\label{eqn:level calc WMax}
    l_\pi(v_i) = f(v_i) - |N(v_i) \cap \{v_1, \dots, v_{i-1}\}|.
\end{equation}

The value of $c_i$ is defined similarly
$$c_i = \sum_{j=1}^i l_\pi(v_j).$$

Since the value of an arrangement is completely determined by the level of its vertices, the following is immediate.

\begin{property}\label{prop: Wmax equiv}
MaxDLA$(D,k)$ is equivalent to W-MaxLA$(G,f,k)$ when $G$ is the underlying graph of $D$ and $f$ maps the vertices of $G$ to their outdegree in $D$.
\end{property}

Finally, we have the following.

\begin{property}\label{prop: levels non-increasing}
The vertices of a maximal linear arrangement are arranged by levels, in non-increasing order.
\end{property}

\begin{proof}
Suppose, for a contradiction, $\pi = (v_1, v_2, \dots, v_n)$ is a maximal arrangement of a graph $G$ with weight function $f: V(G) \to \mathbb{Z}$ so that there are vertices $v_i$, $v_j$ where $i<j$ but $l_\pi(v_i) < l_\pi(v_j)$. We may assume, without loss of generality, that $j = i+1$, and we define a new arrangement $\pi'$ by interchanging $v_i$ and $v_j$ so that $\pi' = (v_1, v_2, \dots, v_{i-1}, v_j, v_i, v_{j+1}, \dots, v_n)$. It follows that $c_k(\pi) = c_k(\pi')$ for all $k \neq i$. If it is the case that $l_{\pi'}(v_j) > l_{\pi}(v_i)$, then it would mean $c_i(\pi') > c_i(\pi)$, which would contradict the maximality of $\pi$. But indeed this is the case: If $G$ has no edge $v_iv_j$, then $l_{\pi'}(v_j) = l_\pi(v_j) > l_\pi(v_i)$. And if $G$ has an edge $v_iv_j$, then $l_{\pi'}(v_j) = l_\pi(v_j) + 1 > l_\pi(v_i)$.
\end{proof}

\begin{property}\label{minValD_byAverage}
Let $D=(V,E)$ be a loopless multi-digraph with $|V| = n$ and $|E| = m$. Then $$MaxDLA(D) \geq \frac{1}{6}m(n+1).$$ This bound is best possible, and equality holds for complete symmetric digraphs.
\end{property}
\begin{proof}
Denote by $\Pi$ the set of all $n!$ possible arrangements of $D$, and consider the sum of their values 
\begin{equation}
val(\Pi) = \sum_{\pi \in \Pi}val(\pi).
\end{equation}
Let $i,j$ be integers, and $e=(u,v) \in E(D)$. For every $1 \leq i<j \leq n$, there are $(n-2)!$ arrangements of $D$ which map both $u$ to $i$ and $v$ to $j$. Therefore, each edge of $D$ contributes exactly $(n-2)!val(K_n)$ to $val(\Pi)$. It follows that 
\begin{align*}
val(\Pi) &= m(n-2)!val(K_n)\\
&= m(n-2)!\binom{n+1}{3}.
\end{align*}
Thus the mean value of all $n!$ arrangements of $D$ is $\frac{1}{n!}\cdot m(n-2)!\binom{n+1}{3} = \frac{1}{6}m(n+1)$. Since there must be an arrangement of $D$ whose value achieves the mean, this completes the proof of the inequality. 

We show that the bound is best possible by demonstrating that equality holds for complete symmetric digraphs. Indeed, the value of an arrangement of an $n$-vertex complete symmetric digraph is equal to $MaxLA(K_n)$, which is $\binom{n+1}{3}$. This is exactly the lower bound $\frac{1}{6}m(n+1)$ when $m = 2\binom{n}{2}$, the number of edges in a complete symmetric digraph.
\end{proof}

\section{NP-hardness results}

In this section we show that both MaxDiCut and MaxDLA are NP-complete, even when restricted to planar graphs with maximum degree 14. That MaxDiCut is NP-complete for planar graphs contrasts that the undirected version is polynomial-time solvable for planar graphs \cite{orlovdorf72, Hadlock1975}. At the end of the section we also show that MaxDLA is NP-complete for directed versions of split graphs, and as a corollary MinLA is NP-complete for split graphs.

A \emph{maximum directed cut} in a digraph $D$ is a directed cut $E(X,Y)$ with largest cardinality over all partitions $\{X,Y\}$. We denote the number of edges in such a cut by $maxDiCut(D)$.

A maximum linear arrangement of a digraph $D$ may or may not contain a maximum directed cut of $D$. However, $maxDiCut(D)$ and $maxDLA(D)$ are related by the following theorem, whose bounds are best possible.

\begin{theorem} If $D$ is a digraph with $n$ vertices and $maxDiCut(D) = t$, then $\frac{1}{2}nt \leq maxDLA(D) \leq (n-1)t$.
\end{theorem}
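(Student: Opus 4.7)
The plan is to prove the two bounds separately. The upper bound is essentially immediate: by Property 3 (the cut-sum formula $val(\pi) = \sum_{i=1}^{n-1} c_i$), any arrangement $\pi$ expresses its value as the sum of the sizes of its $n-1$ cuts. Each such cut is a directed cut of $D$, so $c_i \le t$, and summing yields $val(\pi) \le (n-1)t$. Taking the maximum over $\pi$ gives $val(D) \le (n-1)t$.

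For the lower bound I would use an averaging argument on a carefully chosen family of arrangements. Fix a maximum directed cut $E(X,Y)$ with $|X| = a$ and $|Y| = n-a$. Let $\pi$ be the random arrangement obtained by placing the vertices of $X$ uniformly at random in positions $1, \ldots, a$ and the vertices of $Y$ uniformly at random in positions $a+1, \ldots, n$. For every edge $(u,v) \in E(X,Y)$ we have $\pi(u) < \pi(v)$, so $val_\pi((u,v)) = \pi(v) - \pi(u)$; a quick calculation gives $\mathbb{E}[\pi(u)] = (a+1)/2$ and $\mathbb{E}[\pi(v)] = a + (n-a+1)/2$, so the expected contribution of each cross edge is exactly $n/2$. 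Summing over the $t$ edges in $E(X,Y)$ yields $\mathbb{E}\bigl[\sum_{e \in E(X,Y)} val_\pi(e)\bigr] = tn/2$, and because every other edge contributes a non-negative amount to $val(\pi)$ we obtain $\mathbb{E}[val(\pi)] \ge nt/2$. Hence some arrangement in our family attains value at least $nt/2$, which gives $val(D) \ge nt/2$.

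The main obstacle is picking the right distribution for the lower bound: a uniformly random permutation of $V(D)$ would not single out the edges of the maximum directed cut for favourable treatment, and it is not obvious how to exploit $maxDiCut(D) = t$ directly in terms of arrangements. The key move is to restrict to arrangements that list $X$ first and $Y$ second, because this forces every cross edge $(u,v) \in E(X,Y)$ to contribute positively and makes the expected gap $\mathbb{E}[\pi(v) - \pi(u)] = n/2$ exact, matching precisely the target lower bound $nt/2$. Finally, I would expect both bounds to be tight (witnessed, for example, by a single directed edge for the upper bound and by appropriate tournaments for the lower bound), which the authors likely verify after the main inequality.
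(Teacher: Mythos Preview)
Your argument is correct. The upper bound matches the paper's proof verbatim (each of the $n-1$ cuts in an arrangement has size at most $t$). For the lower bound you and the paper share the same setup---place all of $X$ before all of $Y$ for a maximum dicut $E(X,Y)$---but diverge in the finishing step. The paper deletes non-cut edges and then \emph{deterministically} orders $X$ by non-increasing out-degree and $Y$ by non-decreasing in-degree, asserting that this arrangement has value at least $nt/2$. You instead average over \emph{all} internal orderings of $X$ and of $Y$ and use linearity of expectation to get an expected cross-edge contribution of exactly $n/2$ per edge, hence $nt/2$ total.

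Your averaging argument is cleaner and fully self-contained: the $n/2$ computation is immediate and no further justification is needed. The paper's version is constructive (it names a specific witnessing arrangement) but does not actually verify the bound for that arrangement; in fact your calculation supplies the missing justification, since the sorted ordering maximizes $\sum_{y\in Y} d^-(y)\pi(y) - \sum_{x\in X} d^+(x)\pi(x)$ over all $X$-then-$Y$ orderings by the rearrangement inequality and therefore beats the average $nt/2$. So the two proofs are complementary: yours gives the quickest route to the inequality, while the paper's gives an explicit extremal arrangement whose optimality is most easily certified via your averaging.
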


\begin{proof}
For the upper bound, observe that a linear arrangement of $D$ contains exactly $n-1$ cuts, none of which is larger than $t$. 

For the lower bound, let $E(X,Y)$ be a maximum directed cut of $D$, and construct $D'$ from $D$ by deleting all edges not in $E(X,Y)$. By construction, $D'$ is an orientation of a bipartite graph with all edges directed from $X$ to $Y$. We will now build an arrangement $\pi$ of $D'$ with value at least $\frac{1}{2}nt$. Let $\pi_x$ be an arrangement of $X$ by non-increasing outdegree in $D'$. Similarly, let $\pi_y$ be an arrangement of $Y$ by non-decreasing indegree in $D'$. Then $\pi = \pi_x, \pi_y$ is an arrangement of $D$ satisfying the lower bound. 
\end{proof}

The following corollary is immediate.

\begin{corollary}
The largest cut in a MaxDLA of digraph $D$ has size at least $$\frac{1}{2} MaxDiCut(D).$$
\end{corollary}

We now move on to our NP-Completeness results. First, we show that MaxDLA is NP-complete for orientations of planar graphs. We do this through two reductions: we reduce Planar Max 2SAT to Planar MaxDiCut, and then reduce MaxDiCut to MaxDLA in a way that preserves planarity. After this, we show that the problem remains NP-complete even with maximum degree 14.

Max2SAT asks, given a 2-CNF $\phi$ and positive integer $k$, if there is a truth assignment to the variables of $\phi$ so that at least $k$ clauses in $\phi$ are satisfied. Given a CNF $\phi$, its \emph{variable graph} is the graph with vertex set $V$ so that each $v \in V$ is associated with exactly one variable in $\phi$, and edge set $E$ so that $\{u,v\} \in E$ if and only if the variables associated with $u$ and $v$ occur together in some clause of $\phi$. Planar Max2SAT is an instance of Max2SAT whose 2-CNF has a variable graph that is planar. Planar Max2SAT is NP-complete \cite[p. 254]{guibas1991}.

\begin{theorem}\label{PlanarDiCut}
Planar MaxDiCut is NP-complete.
\end{theorem}
\begin{proof}

It is straightforward that Planar MaxDiCut is in NP.
We proceed by reduction from Planar Max2SAT. Consider an instance of Planar Max2SAT with boolean variables $U$, clauses $C$ where $|C|=l$, and integer $k$. We construct a planar digraph $D=(V,E)$ as follows (see Figure \ref{2SATtoMaxDiCut}):

\begin{enumerate}
    \item Let $V(D) = U$.
    \item For each $c \in C$ of size two, add a gadget consisting of four edges and one or two \lq gadget' nodes as follows:
    \begin{enumerate}
    \item If exactly one variable is positive in $c$, then connect both variables in $c$ by two internally disjoint paths of length two, oriented from the positive variable to the negative variable.
    \item Otherwise, add a pair of symmetric edges between the variables of $c$. Add one additional \lq gadget' vertex $v$ dominating the variables in $c$ so that $v$ is a sink when both variables of $c$ are positive, and a source when both variables are negative.
\end{enumerate}
    \item For each $c \in C$ of size 1, add a gadget consisting of two edges and two `gadget' nodes so that the edges connect the variable in $c$ to the gadget node. Orient the edges toward $c$ if the variable in $c$ is negative, and away from $c$ if the variable in $c$ is positive.
\end{enumerate}

Note that $D$ is planar because the instance of Max2SAT is planar.

\begin{figure}
    \centering
    \includegraphics[scale=0.9]{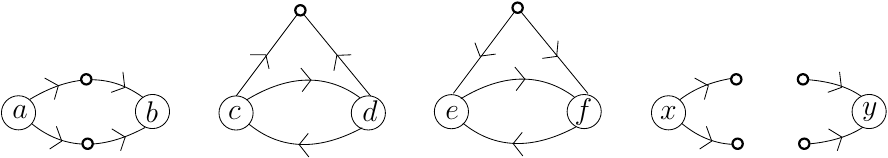}
    \caption{Gadgets for clauses $(a \lor \overline{b}), (c \lor {d}), (\overline{e} \lor \overline{f}), (x), (\overline{y})$.}
    \label{2SATtoMaxDiCut}
\end{figure}

\begin{claim}
There is a truth assignment to the variables of $U$ so that at least $k$ clauses in $C$ are satisfied if and only if $D$ has a dicut of size at least $2k$.
\end{claim}

\noindent Given such a truth assignment $\tau$, we construct a dicut $E=E(\mathcal{T},\mathcal{F})$ in $D$ as follows. First, set
\begin{align*}
    \mathcal{T} &= \{u \in U : \tau(u) = true\} \text{, and}\\
    \mathcal{F} &= \{u \in U : \tau(u) = false\} \text{,}
\end{align*}
and finally place the \lq gadget' vertices so the number of edges across $E$ is maximized. The last part is simple to do because they are all degree two. 

Now, because the gadgets are edge-disjoint, the contribution each makes to $E$ can be evaluated independently. It can be quickly verified that gadgets whose clause is true under $\tau$ contribute exactly two edges to $E$. Hence $E$ has size at least $2k$.

Conversely, suppose $D$ has a dicut $E(\mathcal{T},\mathcal{F})$ of size at least $2k$. We may assume the gadget vertices are arranged so as to maximize the size of this dicut. We find a truth assignment $\tau$ to the variables of $U$ as follows. For all $u \in U$, set
\begin{align*}
    \tau(u) &= true \text{ if } u \in \mathcal{T} \text{, and}\\
    \tau(u) &= false \text{ if } u \in \mathcal{F}.
\end{align*}
Observe that each gadget is constructed so that it contributes either two or zero edges to any dicut. The reader may quickly verify that (1) if a gadget contributes exactly two edges to $E$, then its clause is satisfied under $\tau$,
and (2) if a gadget contributes exactly zero edges to $E$, then its clause is not satisfied under $\tau$.

Since $E$ has size at least $2k$, and all edges of $D$ are in gadgets, it must be that at least $k$ gadgets contribute two edges to $E$, and so at least $k$ clauses of $C$ are satisfied under $\tau$. 
\end{proof}

We now strengthen Theorem \ref{PlanarDiCut} to include the restriction that no vertex has degree exceeding fourteen. We reduce from Planar 3SAT via Planar Max2SAT to Planar MaxDiCut, following existing reductions. However, we will be careful about limiting the number of clauses in which each variable occurs, and what those clauses look like. For this reason, we define a restricted version of Planar 3SAT. Given a CNF $\phi$, its bipartite \emph{clause-variable graph} is the graph with vertex set $V \cup C$ so that each $v \in V$ is associated with exactly one variable in $\phi$, each $c \in C$ is associated with exactly one clause in $\phi$, and edge set $E$ so that $\{v,c\} \in E$ if and only if the variable associated with $v$ occurs in the clause associated with $c$.

{Restricted Planar 3-Satisfiability (Restricted Planar 3SAT)} asks, given a CNF $\phi$ with 
a set $U$ of boolean variables, a set $C$ of clauses over $U$ so that the following conditions are satisfied:
\begin{enumerate}
    \item the clause-variable graph of $\phi$ is planar; and
    \item each variable in $U$ occurs in exactly three clauses of $C$ so that
    \begin{enumerate}
        \item two of the clauses contain exactly two literals each, with exactly one negated variable per clause,
        \item the third clause contains exactly three literals;
    \end{enumerate}
    
\end{enumerate}
{is $C$ satisfiable?}

Note that Restricted Planar 3SAT is in NP. We now prove that it is NP-complete following a reduction of \cite[pp. 96-97]{MaňuchJán2008FPCT} as done in \cite{tippenhauer2016}.

\begin{lemma}
Planar 3SAT is polynomial-time reducible to Restricted Planar 3SAT.
\end{lemma}

\begin{proof}
Consider an instance of Planar 3SAT $\phi$ with variables $U$, clauses $C$, and integer $k$. Find a planar embedding of the clause-variable graph in polynomial time \cite{AuslanderL.1961OIGi}. 

We construct an instance of Restricted Planar 3SAT $\psi$ from $\phi$ as follows. For every $u \in U$, let $l$ be the number of occurrences of $u$ in the clauses of $C$. Replace $u$ in $U$ with $l$ new variables $u_1, ..., u_l$ and, to $C$, add $l$ new clauses $\overline{u_i} \lor u_{i+1}$ (mod $l$) for $i = 1, ..., l$. Replace each occurrence of $u$ in an original clause of $C$ with one of the distinct new variables, choosing carefully so as to preserve planarity of the clause-variable graph, as in Figure~\ref{P3toRest}. Then $\psi$ is an instance of Restricted Planar 3SAT.

\begin{claim}
The original instance of Planar 3SAT $\phi$ is satisfiable if and only if the restricted version $\psi$ is satisfiable.
\end{claim}

Suppose $\phi$ is satisfiable. Then the same truth assignment $\tau$ which satisfies $\phi$ can be used to find a truth assignment which satisfies $\psi$. Just assign each new variable $u_i$ of $\psi$ a truth value equal to $\tau(u)$. 

Conversely, suppose $\psi$ is satisfiable. Let $\tau$ be a truth assignment which satisfies the clauses of $\psi$. Then it must be the case that if new variables $u_i$ and $u_j$ of $\psi$ stem from the same original variable $u$ of $\phi$, then $\tau(u_i) = \tau(u_j)$. This is because of the new clauses added to $\psi$ for variable $u$, which act like a circle of implication. So assign to each original variable $u$ of $\phi$ a truth value equal to $\tau(u_1)$. This is a truth assignment which satisfies the clauses of $\phi$.
\begin{figure}
    \centering
    \includegraphics[scale=0.8]{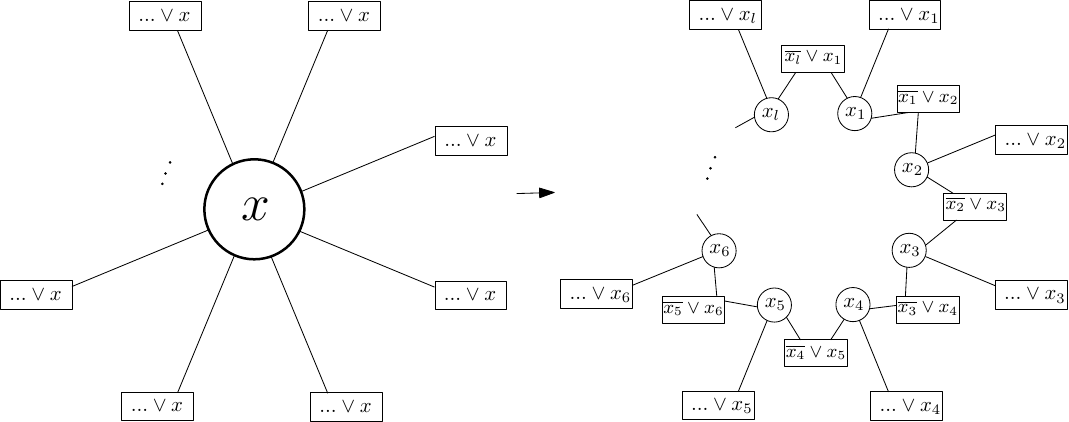}
    \caption{Planar 3SAT to restricted Planar 3SAT \cite{tippenhauer2016}}
    \label{P3toRest}
\end{figure}
\end{proof}

Now we define a restricted version of Planar Max2SAT.
The Restricted Planar Maximum 2-Satisfiability (Restricted Planar Max2SAT) problem asks, given a set $U$ of boolean variables, a set $C$ of clauses over $U$, and an integer $k$ so that the following conditions are satisfied:
\begin{enumerate}
    \item the variable graph of $C$ is planar;
    \item each $c \in C$ has $|c| \leq 2$; and
    \item each variable in $U$ occurs in at most six clauses of $C$, at most two of which contain either both positive or both negative variables;
\end{enumerate}
{is there a subset $C'$ of $C$ with $|C'| \geq k$ that is satisfiable?}

Note that Restricted Planar Max2SAT is in NP. We show that it is NP-complete following a reduction of Garey, Johnson, and Stockemeyer \cite[p. 240]{garey76}.



\begin{lemma}
Restricted Planar 3SAT is polynomial-time reducible to Restricted Planar Max2SAT.
\end{lemma}

\begin{proof}
Consider an instance of Restricted Planar 3SAT with variables $U$ and clauses $C$, where $C$ contains $l$ clauses of size 3, and $m=3l$ clauses of size 2. We construct an instance of Restricted Planar Max2SAT with variables $U'$, clauses $C'$, and integer $k$ as follows. First, set
\begin{align*}
    U' &= U \cup \{d_1, d_2, ..., d_l\}\text{, and}\\
    C' &= \{c \in C : c \text{ has size two}\}.
\end{align*}

Then add ten clauses to $C'$ for every clause $c_i = (a_i, b_i, c_i) \in C$, where $i = 1, ..., l$ of size three in the following way.
If $c_i$ contains all negated variables, add the following ten clauses to $C'$: 
    \begin{equation*}
    (a_i), (b_i), (c_i), (\overline{d_i}), (\overline{a_i} \lor \overline{b_i}), (\overline{b_i} \lor \overline{c_i}), (\overline{c_i} \lor \overline{a_i}), (a_i \lor {d_i}), (b_i \lor {d_i}), (c_i \lor {d_i}).
    \end{equation*}
Otherwise, add the following ten clauses to $C'$: 
    \begin{equation*}
    (a_i), (b_i), (c_i), (d_i), (\overline{a_i} \lor \overline{b_i}), (\overline{b_i} \lor \overline{c_i}), (\overline{c_i} \lor \overline{a_i}), (a_i \lor \overline{d_i}), (b_i \lor \overline{d_i}), (c_i \lor \overline{d_i}).
    \end{equation*}
Finally, let $k=7l + m$. The reader can verify that this is an instance of Restricted Planar Max2SAT.

\begin{claim}
Recall that $C$ contains $l$ clauses of size 3, and $m=3l$ clauses of size 2. The clauses $C$ are satisfiable if and only if $7l + m$ of the clauses in $C'$ are satisfiable.
\end{claim}
We begin with the crucial observation of Garey, Johnson and Stockemeyer \cite{garey76}, which the reader may verify. Let $\tau$ be a truth assignment to $U$. Then $\tau$ satisfies a clause $c \in C$ if and only if $\tau$ may be extended to $U'$ so that seven of the ten clauses in $C'$ stemming from $c$ are satisfiable. Further, if $\tau$ does not satisfy $c$, then any extension of $\tau$ to $U'$ can satisfy at most six of the ten clauses in $C'$ stemming from $c$.


Suppose $C$ is satisfiable. Let $\tau$ be a truth assignment to $U$ that satisfies $C$. By the observation above, $\tau$ may be extended to $U'$ so that seven of every ten clauses in $C'$ associated with a three-variable clause in $C$ are satisfied. Since this extension necessarily satisfies all $m$ 2-literal clauses in $C'$ that are also in $C$, the extension of $\tau$ satisfies $7l + m$ of the clauses in $C'$.

Conversely, suppose $7l + m$ of the clauses in $C'$ are satisfiable. Let $\tau'$ be a truth assignment to $U'$ that satisfies at least $7l +m$ of the clauses in $C'$, and let $\tau$ be $\tau'$ restricted to the variables in $U$. Observe that at most seven of the ten clauses in $C'$ stemming from a single 3-variable clause in $C$ can be simultaneously satisfied. Since there are only $l$ such collections of ten clauses, it must be the case that $\tau'$ satisfies seven of ten clauses for all such collections of ten clauses, and also all $m$ two-variable clauses in $C'$. But this means, by the observation above, that $\tau$ is a truth assignment to $U$ that satisfies $C$.
\end{proof}

We consider the degree of a vertex $v$ in a digraph to be the number of edges incident to $v$.

\begin{theorem}\label{maxdegee14}
MaxDiCut is NP-complete on planar graphs whose maximum degree is at most 14.
\end{theorem}
\begin{proof}
As previously observed, Planar MaxDiCut is in NP, therefore so is the restricted version.
We proceed by reduction from Restricted Planar Max2SAT, in the same manner as Theorem \ref{PlanarDiCut}. We will not repeat the details here, but only observe how the restricted version of Planar Max2SAT gives rise to the degree restrictions.

Consider an instance of Restricted Planar Max2SAT with boolean variables $U$, clauses $C$, and integer $k$. We construct a planar digraph $D=(V,E)$ as in Theorem \ref{PlanarDiCut}. 

Recall that, $V = U$, and the degree of any $v \in V$ depends only on the clauses $U$ that contain $v$. Since $v$ may be in at most six clauses, and all of those clauses but perhaps two (those with two variables of the same sign), contribute two to the degree of $v$, and the remaining contribute three, it is clear that no vertex may have degree exceeding 14.
\end{proof}

\begin{figure}[ht]
    \centering
    \includegraphics[scale=1.0]{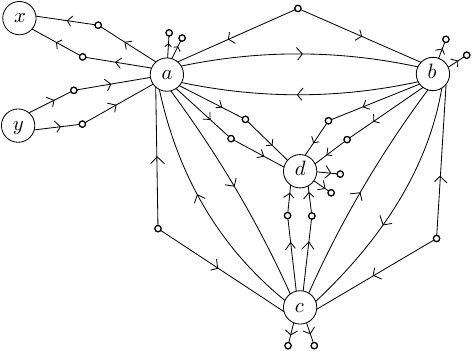}
    \caption{Construction of a reduction digraph for all clauses containing some variable $a$ in Restricted Planar 3SAT. The clauses in Restricted Planar 3SAT are $(a \lor b \lor c), (a \lor \overline{x}), (\overline{a} \lor y)$. The associated clauses in restricted Planar 2SAT are $(a), (b), (c), (d), (\overline{a} \lor \overline{b}), (\overline{b} \lor \overline{c}), (\overline{c} \lor \overline{a}), (a \lor \overline{d}), (b \lor \overline{d}), (c \lor \overline{d}); (a \lor \overline{x}); (\overline{a} \lor y)$.}
    \label{rest3SATtorestMaxDiCut}
\end{figure}

Figure \ref{rest3SATtorestMaxDiCut} depicts the digraph arising from the above reductions applied to a small instance of Restricted Planar 3SAT.


Theorem \ref{maxdegee14} prompts the question, what is the lowest maximum degree for which Planar MaxDiCut remains NP-complete? It is straightforward to observe that MaxDiCut is polynomial-time solvable on graphs whose maximum degree is two, and so the answer is somewhere between three and thirteen.

\begin{theorem}
MaxDLA is NP-complete on planar graphs whose maximum degree is at most 14.
\end{theorem}

\begin{proof}
We show that MaxDiCut is polynomial-time reducible to MaxDLA in a way that preserves maximum degree.
Consider an instance of MaxDiCut with digraph $D=(V,E)$ on $n$ vertices, and positive integer $k$. We may assume $D$ has at least one edge, and that $k \leq \binom{n}{2}$ because otherwise the solution to MaxDiCut is trivial.

Construct a new graph $D'$ from $D$ by adding a set $S$ of $n^3$ isolated vertices to $D$. That is, $D'=(V(D) \cup S, E(D))$. 
We will show that
$D$ has a dicut of size at least $k$ if and only if $D'$ has a linear arrangement of size at least $kn^3$.

Suppose $D$ has a dicut $C= E(A,B)$ of size at least $k$. Then, arrange the vertices of $D'$ so that all vertices in $A$ come before all vertices in $S$, which in turn come before all vertices in $B$. This arrangement has value at least $kn^3$ because the large cut $C$ is repeated for every vertex in $S$.

Now, suppose $G$ has no cut of size at least $k$. Then $G'$ also has no cut of size at least $k$. Hence an arrangement $\pi$ of $D'$ can have value at most $(k-1)(n^3 + n-1)$ because every one of the $(n^3 +n -1)$ cuts in $\pi$ has size at most $(k-1)$. But $k \leq \binom{n}{2} < n^2$ by assumption, so that $n^3 > (k-1)n$. But this means $\pi$ has value at most $(k-1)(n^3 + n -1) \leq kn^3 + (k-1)n - n^3 < kn^3$.
\end{proof}


\begin{corollary}\label{cor:S}
Let $\mathcal{C}$ be a class of digraphs closed under the operation of adding an isolated vertex. If MaxDiCut is NP-Complete for $\mathcal{C}$, then MaxDLA is also NP-Complete for $\mathcal{C}$.
\end{corollary}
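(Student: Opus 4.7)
The plan is to reuse verbatim the reduction from the preceding theorem and simply verify that it keeps instances inside the class $S$. Given an instance $(D,k)$ of MaxDiCut with $D \in S$ on $n$ vertices, that reduction outputs the MaxDLA instance $(D', kn^3)$, where $D'$ is obtained from $D$ by adjoining $n^3$ isolated vertices. Applying the closure hypothesis on $S$ inductively $n^3$ times shows $D' \in S$, so the construction is a many-one reduction from MaxDiCut restricted to $S$ into MaxDLA restricted to $S$.

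After that, I would dispatch the routine checks. First, MaxDLA restricted to $S$ is in NP: an arrangement $\pi$ is a certificate of polynomial bit-length, and $val_D(\pi)$ can be evaluated in polynomial time by summing the edge values from the definition. Second, the reduction is polynomial: $D'$ has $n + n^3$ vertices and the same edges as $D$, which is polynomial in $|D|$, and the parameter $kn^3$ requires only polynomially many bits (indeed $k$ is bounded by $|E(D)| \le n^2$). Correctness, i.e.\ that $D$ has a directed cut of size at least $k$ iff $D'$ has an arrangement of value at least $kn^3$, is already established by the previous theorem. Combining NP-hardness via this reduction with NP-membership yields NP-completeness of MaxDLA on $S$.

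I do not expect a real obstacle; the whole content of the corollary is the observation that the reduction already used to prove MaxDiCut $\leq$ MaxDLA happens to modify the input only by adding isolated vertices, which is precisely the operation under which $S$ is assumed closed. The one point worth a sentence of care is interpreting the closure hypothesis as permitting iterated application: since the hypothesis asserts that adding a single isolated vertex stays in $S$, repeating this $n^3$ times remains in $S$, so no strengthening of the assumption is needed.
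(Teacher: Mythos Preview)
Your proposal is correct and is exactly the intended argument: the paper states the corollary without proof precisely because the reduction in the preceding theorem only adds isolated vertices, so closure of $S$ under that operation immediately yields the result. Your additional care about NP-membership and iterating the closure hypothesis is fine but more detail than the paper itself provides.
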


By Corollary \ref{cor:S}, it follows that MaxDLA is NP-complete for both split digraphs and strict split digraphs. 

A \emph{split} graph is a graph whose vertex-set can be partitioned into an independent set and a clique.

A \emph{strict split} digraph, as defined in \cite{hellcruz17} is a digraph whose vertex set can be partitioned into an independent set and a strong clique. A \emph{strong clique} is a set of vertices $S$ in a digraph $D$ so that $(u,v) \in E(D)$ for every pair of distinct vertices $u,v \in S$.

A \emph{split} digraph, as defined in \cite{LaMarM.Drew2010Sd}, is a digraph whose vertex set can be partitioned into four sets, $A, B, C, D$ so that $A$ is a strong clique; $D$ is an independent set; all possible edges go from $A$ to $C$ and from $B$ to $A \cup C$; and no edges go from $D$ to $B$ or from $C$ to $B \cup D$. Additionally, it is forbidden to place all vertices in one of $B$ or $C$. It is straightforward to see that split graphs, strict split digraphs, and split digraphs are all closed under the operation of adding an isolated vertex.

\begin{corollary}
    MaxDLA is NP-complete for both split digraphs and strict split digraphs.
\end{corollary}

\begin{proof}
First, MaxCut for split graphs is NP-complete by a theorem of Bodlaender \cite{Bodlaender94}. Since the class of split graphs is contained in the intersection of strict split digraphs and split digraphs,\footnote{Every split graph, when each undirected edge is viewed as two symmetric directed edges, is both a split digraph and a strict split digraph.} it follows that MaxDiCut is NP-complete for those digraphs. By Corollary \ref{cor:S}, the result follows.
\end{proof}

\begin{corollary}
    MinLA is NP-complete for split graphs.
\end{corollary}
\begin{proof}
    Since MaxCut is NP-complete for split graphs, MaxLA (the undirected version of MaxDLA) is also NP-complete for split graphs by Corollary \ref{cor:S}. This means MinLA is NP-complete for the complement of split graphs, which is just the class of split graphs because they are closed under complementation.
\end{proof}


\section{Algorithm for Oriented Trees}
\label{ortrees}

In this section, we describe an algorithm solving MaxDLA on orientations of trees with degrees bounded by a constant. This same algorithm, with a slight modification described at the end of this section, solves MinLA on graphs $G$ when $\overline{G}$ is a bounded-degree tree.

The algorithm will work with \emph{sorted level signatures}, which are similar to signatures (of cuts, cf. Equation \ref{eqn: signature of cuts}), but useful in the algorithm to save computation time. We first define \emph{level signature}, and then \emph{sorted level signature}. 

The \emph{level signature} $l$ of an arrangement $\pi = (v_1, v_2, ..., v_n)$ is the $n$-tuple of the levels of its vertices, 
\begin{equation}
    l(\pi)=({l_1}, {l_2}, ..., {l_n}),
\end{equation}
where ${l_i}= l_\pi(v_i)$. The value of $l$ is defined in the natural way (cf. Equation \ref{eqn: level calculation}),
\begin{equation}\label{vallevelsig}
    val\big(l(\pi)\big) = \sum_{i=1}^{n} (n-i+1){l_i}
\end{equation}
so that $val\big(l(\pi)\big) = val\big(s(\pi)\big) = val(\pi)$.
The level signature of $\pi$ contains the same information as the signature of $\pi$, and it is straightforward to convert between the two. Indeed, if $s(\pi) = (c_1, c_2, ..., c_{n})$ then $l(\pi)=(c_1-0, c_2-c_1, ..., c_n-c_{n-1})$. 

The \emph{sorted level signature} $l^*$ of an arrangement $\pi = (v_1, v_2, ..., v_n)$ is obtained from $l(\pi)$ by sorting it into a non-decreasing order,
\begin{equation}
    l^*(\pi)=(l_1^*, l_2^*, ..., l_n^*),
\end{equation}
where $l_1^* \geq l_2^* \geq ... \geq l_n^*$. The value of $l^*$ is defined analogously to the value of $l$,
\begin{equation}
    val\big(l^*(\pi)\big) = \sum_{i=1}^{n} (n-i+1)l^*_i.
\end{equation}
Note that, in general, $val\big(l(\pi)\big)$ and $val\big(l^*(\pi)\big)$ may be different, however maximal arrangements are a special case.
Recall (Property \ref{prop: levels non-increasing}) that the vertices of a maximal linear arrangement are arranged by non-decreasing order of levels. Hence, if $\pi$ is a maximal arrangement, then $l(\pi) = l^*(\pi)$.

Our goal is to describe an algorithm solving MaxDLA on directed tree $D$ (whose maximum degree is bounded by a constant $d$), and integer $k$.
However, we describe instead an algorithm solving W-MaxLA for undirected tree $G$ (whose maximum degree is bounded by $d$), vertex weight function $f$ (which is also bounded by $d$), and integer $k$. The W-MaxLA problem contains the MaxDLA problem by Property \ref{prop: Wmax equiv}.

The algorithm is recursive. We make an observation which allows us to remove an edge from the tree, and hence break it apart, in \ref{findEdgeStep} and \ref{newFunctionsStep} of Algorithm \ref{boundedDegreeTreeAlg}.


\par

\begin{observation}\label{delEdge}
Let $G=(V,E)$ be an undirected graph, $e=uv \in E(G)$, $f:V(G) \mapsto \mathbb Z$ be a weight function, and $\pi$ be an arrangement of $V(G)$ with $\pi(u) < \pi(v)$. Define $G'= G-e$ and $f'$ as
\begin{equation}
      f'(x) = \begin{cases}
f(x) -1 &\text{if $x=v$}\\
f(x) &\text{otherwise.}
\end{cases}
  \end{equation} 
  Then the signature of $\pi$ is unchanged under either $G,f$ or $G', f'$, that is $s_{G,f}(\pi) = s_{G',f'}(\pi)$. 
\end{observation}
\begin{proof}
By Equation \ref{eqn:level calc WMax} it is straightforward to see that the level of a vertex $x$ in $\pi$ is unchanged under either $G,f$ or $G', f'$, that is $l_{G,f}(x) = l_{G',f'}(x)$.
\end{proof}

We describe an algorithm which, given a bounded-degree undirected tree $G$, and bounded weight function $f$ on its vertices, outputs a set of sorted level signatures of $G,f$. Figure \ref{alg} shows an example flow of this algorithm for a path of length three.

\begin{algorithm}\label{boundedDegreeTreeAlg}
FindSignatures($G$,$f$)
\end{algorithm}

\quad INPUT: A tree $G=(V,E)$ on $n$ vertices with maximum degree $d$, and a weight function $f$ from $V(G)$ to the integers, where $f(v) \leq d$, with $d$ a constant.
\par
\quad OUTPUT: A set $S$ of sorted level signatures of $G,f$, so that the largest value of an element in $S$ is equal to $W\text{-}MaxDLA(G,f)$.

\begin{steps}
  \item \label{base step}If $G$ is an isolated vertex $v$, set $S = \{ (f(v))\}$ and terminate.
  \item \label{findEdgeStep}If $G$ is not an isolated vertex, find an edge $e=(u,v)$ that minimizes the size of the largest component when $e$ is deleted.
  \item\label{newFunctionsStep} Define two new weight functions $f_u$ and $f_v$ so that  
  \par
  \begin{equation}
      f_u(x) = \begin{cases}
f(x) -1 &\text{if $x=u$}\\
f(x) &\text{otherwise}
\end{cases}
  \end{equation} \qquad and \qquad \begin{equation}
      f_v(x) = \begin{cases}
f(x) -1 &\text{if $x=v$}\\
f(x) &\text{otherwise.}
\end{cases}
  \end{equation}
  \item \label{recursionStep}Recursively calculate the sorted level signatures of $G-e$ with both $f_u$ and $f_v$. Do this in the following way. Let $G_1$ and $G_2$ be the two components of $G-e$. Make four recursive calls as follows: 
  \begin{align*}
      S_{1,u} &= FindSignatures(G_1,f_u|_{V(G_1)}),\\
      S_{2,u} &= FindSignatures(G_2,f_u|_{V(G_2)}),\\
      S_{1,v} &= FindSignatures(G_1,f_v|_{V(G_1)}),\text{ and}\\
      S_{2,v} &= FindSignatures(G_2,f_v|_{V(G_2)}).
  \end{align*}
  
  \item \label{combine}Combine $S_{1,u}$ and $S_{2,u}$ to form $S_u$ by combining every sorted level signature in $S_{1,u}$ with every sorted level signature in $S_{2,u}$ as follows. If $l = (l_1, l_2, ..., l_p) \in S_{1,u}$ and $l' = (l_1', l_2', ..., l_q') \in S_{2,u}$ then their new combined sorted level signature is obtained from $(l_1, l_2, ..., l_p,l_1', l_2', ..., l_q')$ by sorting it into a non-increasing order. Delete duplicate signatures. Do the same for $S_{1,v}$ and $S_{2,v}$ to form $S_v$.
  \item \label{unionStep}Let $S = S_u \cup S_v$.
  \item \label{return}Return $S$.
\end{steps}

On completion of Algorithm \ref{boundedDegreeTreeAlg}, it is simple to check if there is a sorted level signature in the output $S$ with value at least $k$.

\subsection*{Correctness}
We analyze the correctness of Algorithm \ref{boundedDegreeTreeAlg}.

\begin{lemma}\label{new}
    Let $S$, $G$, and $f$ be as described in Algorithm \ref{boundedDegreeTreeAlg}. Every element in the output $S$ is a sorted level signature of $G,f$.
\end{lemma}
\begin{proof}
    We proceed by induction on $|V(G)|$. As a base case, if $V = \{v\}$, then by Step 1 $S =\{(f(v))\}$. But the level of $v$ in the (unique) arrangement $\pi=(v)$ is $f(v)$ by Equation \ref{eqn:level calc WMax}. It follows that $(f(v))$ is a sorted level signature of $G,f$.

    For the inductive step, suppose $|V| > 1$, and suppose $l \in S$. Let $e = uv$ be the edge chosen in Step 2, and let $G_1, G_2$ be as in Step 4, with $v \in V(G_1)$. We may suppose without loss of generality that $l \in S_u$ in Step 5 (intuitively this means the algorithm is ``guessing'' that $v$ comes before $u$); and that $l_1 \in S_{1,u}$ and $l_2 \in S_{2,u}$ are two sorted level signatures which were combined in Step 5 to make $l$.

    By the induction hypothesis, $l_1$ and $l_2$ are sorted level signatures of $G_1,f_u|_{V(G_1)}$ and $G_2,f_u|_{V(G_2)}$ respectively. Suppose then, that $l_1$ is a sorted level signature of the arrangement $(v_{1_1}, v_{1_2}, \dots, v_{1_p})$, while $l_2$ is a sorted level signature of the arrangement $(v_{2_1}, v_{2_2}, \dots, v_{2_q})$. But, by Equation \ref{eqn:level calc WMax}, the level of a vertex in an arrangement depends only on its placement relative to its neighbours. Thus it follows from Observation \ref{delEdge}, and because $l \in S_u$, that $l$ is a sorted level signature of the arrangement $(v_{1_1}, v_{1_2}, \dots, v_{1_p},v_{2_1}, v_{2_2}, \dots, v_{2_q})$.
\end{proof}
We must show that the maximum value of a sorted level signature in $S$ is equal to $W\text{-}MaxDLA(G,f)$.

\begin{theorem}
Let $S$ be the final set returned by Algorithm \ref{boundedDegreeTreeAlg} when tree $G$ and weight function $f$ are input to the algorithm. Then W-MaxDLA$(G,f)$ is the maximum value of a sorted level signature in $S$.
\end{theorem}

\begin{proof}
First, we define the notion of \emph{consistent}. If $e = uv \in E(G)$ is chosen in Step 3 of the algorithm, and $f_u,f_v$ are as defined in that step, then an arrangement $\pi$ of $G,f$ is consistent with $f_u, G-e$ if $\pi(v) < \pi(u)$, and consistent with $f_v, G-e$ if $\pi(u) < \pi(v)$. For input $G,f$, we define a \emph{path} in the recursion of the algorithm as a sequence of successive recursive calls (which recurse on either both components of of $G-e$, with  $f_u$ or both components of $G-e$ with  $f_v$). Intuitively, one may think of a path in the recursion as a sequence of choices for ordering the ends of each edge. A path in the recursion of the algorithm is \emph{consistent} with an arrangement $\pi$ if every recursive call in that path is made on a graph, weight function which is consistent with $\pi$. 

Denote the maximum value of a sorted level signature in $S$ by ${l}^{max}$.
First, we show that $l^{max} \geq W\text{-}MaxDLA(G,f)$.
Let $\pi$ be an arrangement of $G,f$ so that $val(\pi) = W\text{-}MaxDLA(G,f)$. 
Let $l^* \in S$ be a sorted level signature obtained by following a path in the recursion that is consistent with $\pi$. By Equation \ref{eqn:level calc WMax} and the definition of consistent, it follows that for every $v \in V(G)$ the level of $v$ in $l^*$ is equal to its level in $\pi$. Since the vertices in $\pi$ are arranged by non-increasing levels per Property \ref{prop: levels non-increasing}, it follows that the value of $l$ is equal to the value of $\pi$. 
Therefore $l^{max} \geq W\text{-}MaxDLA(G,f)$.

Next, we show that  $l^{max} \leq {W\text{-}MaxDLA(G,f)}$. Let $l^* = (l_1^*, l_2^*, ..., l_n^*) \in S$ be a sorted level signature whose value is $l^{max}$. For $i=1, \dots, n$, let $v_i$ be the vertex associated with $l_i^*$. That is, at the base of the recursion in \ref{base step}, $v_i$ has level $l_i^*$. 

We claim that $l^*$ is obtained by following a recursive path that is consistent with $\pi = (v_1, v_2, \dots, v_n)$. Suppose the claim is false, then there exists an edge $e = v_iv_j \in E(G)$ with $i < j$, so that a path $P$ in the recursion resulting in $l^*$ is consistent with an arrangement where $v_j$ comes before $v_i$.

Let $P'$ be a path in the recursion making identical choices to $P$ at every edge except edge $e$, so that $P'$ chose instead $v_j$ before $v_i$. Let ${l^*}'$ be the sorted level signature resulting from $P'$. Then the level of every vertex is the same in both ${l^*}$ and ${l^*}'$, except $v_i$ and $v_j$. The level of $v_i$ is one greater, and the level of $v_j$ is one less, in ${l^*}$ than in ${l^*}'$. But since $v_i$ precedes $v_j$ in $l^*$, it means $l_i^* \geq l_j^*$. It follows that $val({l^*}') > val(l^*)$, which contradicts that $val({l^*}) = l^{max}$.

This means that $val(l^*) = val(\pi)$. Since $val(\pi) \le W\text{-}MaxDLA(G,f)$, it follows that $l^{max} \leq {W\text{-}MaxDLA(G,f)}$.
\end{proof}

\subsection*{Running Time}
We show that the running time is $O(n^{4d})$.

\ref{findEdgeStep} can be done in time linear in $n$ by, for example, moving from the leaves upward and orienting each edge of the tree toward the largest subtree when that edge is deleted. \ref{newFunctionsStep} can be done in constant time.

In \ref{recursionStep} of Algorithm \ref{boundedDegreeTreeAlg}, four recursive calls are made, and the number of vertices in the tree for these calls has size at most $\frac{(n-1)(d-1)}{d}+1$ where $n=|V(G)|$. This means that at each step of recursion the size of the problem (in terms of number of vertices) is reduced by a factor of at least $\frac{d}{d+1}$. 

\ref{combine} and \ref{unionStep} take up the majority of computation time. But this remains polynomial because we are only keeping track of sorted level signatures. A sorted level signature is concerned only with the number of vertices at each level. There are $2d+1$ possible levels, so only $O(n^{2d})$ possible signatures. This means at \ref{combine}, that  $|S_{1,u}|$ and $|S_{2,u}|$ are $O(n^{2d})$ and so can be combined in time $O(n^{4d})$. At \ref{unionStep}, again, $|S|$ is $O(n^{2d})$, so this step can also be done in time $O(n^{4d})$. The bound of $O(n^{4d})$ dwarfs the time required to sort and delete duplicates in \ref{combine} and \ref{unionStep}.

This leads to a recurrence of $T(n) = 4T(\frac{d}{d+1}n) + O(n^{4d})$, and running time $O({n^{4d}})$. Therefore, the running time of Algorithm \ref{boundedDegreeTreeAlg} is $O(n^{4d})$.

\begin{figure}[ht]
    \centering
    \includegraphics[scale=.92]{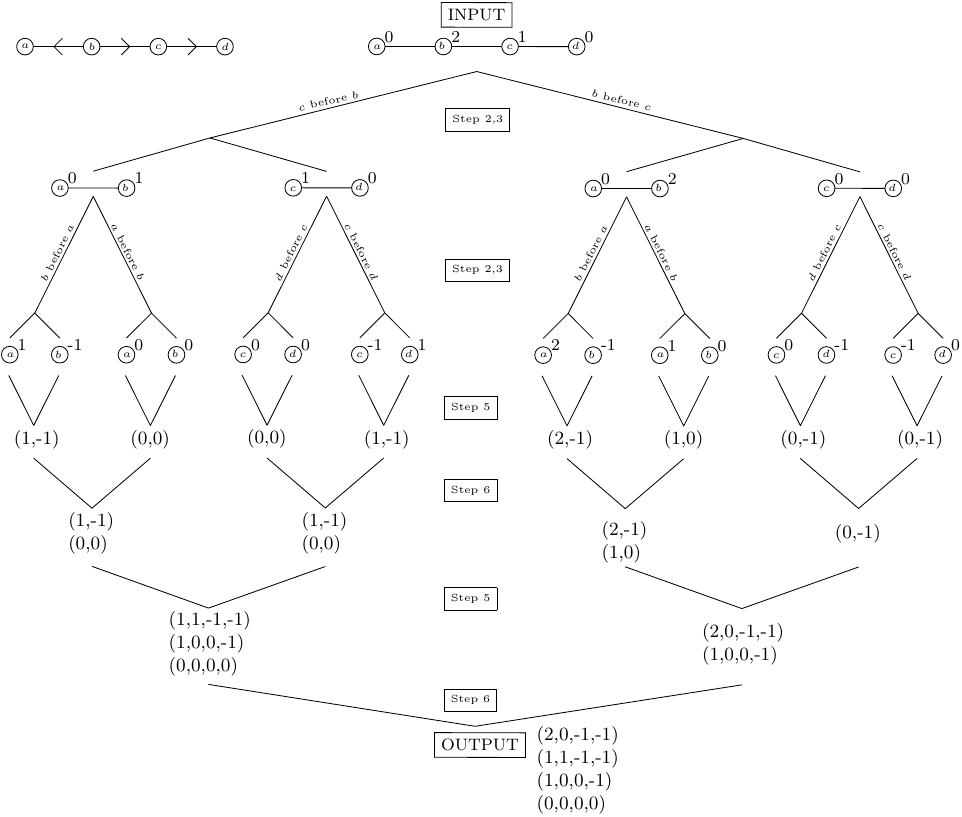}
    \caption{Flow of FindSignatures $(G,f)$ when solving for the above orientation of a path of length three.}
    \label{alg}
\end{figure}

\subsection*{Modification to solve MinLA}

Algorithm \ref{boundedDegreeTreeAlg} can be modified to show the following.

\begin{theorem}
MinLA is polynomial-time solvable on graphs $G$ where $\overline{G}$ is a tree with degree bounded by a constant.
\end{theorem}

As observed in the introduction, solving MaxLA for bounded-degree undirected trees is equivalent to solving MinLA on their complements. 
The modification is simple. First, we change Observation \ref{delEdge} by allowing graph $G$ to be a multigraph, with $k$ edges between $u$ and $v$. 

\begin{observation}\label{delEdgeUndir}
Let $G=(V,E)$ be an undirected multigraph, $e=uv \in E(G)$ with multiplicity $k$, $f:V(G) \mapsto \mathbb Z$ be a weight function, and $\pi$ be an arrangement of $V(G)$ with $\pi(u) < \pi(v)$. Define $G'= G-e$ and $f'$ as follows:
\begin{equation}
      f'(x) = \begin{cases}
f(x) -k &\text{if $x=v$}\\
f(x) &\text{otherwise.}
\end{cases}
  \end{equation} 
Then the signature of $\pi$ is unchanged under either $G,f$ or $G', f'$, that is $s_{G,f}(\pi) = s_{G',f'}(\pi)$.
\end{observation}

Now, to solve MinLA on an undirected simple graph $G$ where $\overline{G}$ is a bounded-degree tree, and integer $k$, use Algorithm \ref{boundedDegreeTreeAlg} with the following two modifications:
\begin{enumerate}
    \item Input $\overline{G},f$ where $f:V(G) \mapsto \mathbb{Z}$ so that $f(v)= d_{\overline{G}}(v)$.
    \item In \ref{newFunctionsStep}, decrement the weight function by two instead of one.
\end{enumerate}

The first modification is based on, as previously mentioned, the fact that an arrangement is minimum for $G$ exactly when it is maximum for $\overline{G}$. The second modification follows from Observation \ref{delEdgeUndir} by considering $\overline{G}$ as a symmetric digraph. 

It follows that $G$ has an arrangement of size at most $k$ if and only if one of the level signatures output from modified Algorithm \ref{boundedDegreeTreeAlg} has value\footnote{The value of an arrangement of a complete graph on $n$ vertices $K_n$ is $\sum_{i=1}^{n} i(n-i)=\binom{n+1}{3}$.} at least $\binom{n+1}{3} - k$.

\section{Graphs with a Maximum Arrangement}\label{sect:wonderful}

Inspired by a theorem of Harper exhibiting cuts of the hypercube minimum for every cardinality \cite{Harper1966}, we look for graphs with an arrangement so that every cut is \emph{maximum} for its position in the arrangement. 

First, we make these notions explicit. Recall the partial order $S$, coordinatewise on cut signatures, defined in Section \ref{sect:properties}. We have thus far considered maximal signatures, but in this section we will consider digraphs $D$ (or graphs $G$ with weight function $f$) for which the partial order $S$ has a greatest (or least) element. We call such digraphs (or graphs with weight function) \emph{nested-maximum} and \emph{nested-minimum} respectively. We call an arrangement of $D$ (or $G,f$) whose signature is the greatest (least) element of $S$ a \emph{nested-maximum} (respectively \emph{nested-minimum}) arrangement. In this language, Harper's theorem \cite{Harper1966} is that a hypercube is nested-minimum.

Three classes of nested-maximum digraphs are presented: tournaments, orientations of graphs with maximum degree at most two, and transitive acyclic digraphs. Tournaments and transitive acyclic digraphs are also nested-minimum. 

\subsection*{Tournaments}

Recall that a tournament is an orientation of a complete graph. We show that a tournament is both nested-maximum and nested-minimum.


\begin{theorem}
Let $D$ be a tournament on $n$ vertices with out-degree sequence $d_1^+ \geq d_2^+ \geq ... \geq d_n^+$. Then $D$ is both nested-maximum and nested-minimum. Furthermore, the maximum value of an arrangement of $D$ is 
\begin{equation}\label{maxTourney}
MaxDLA(D) = \sum_{i=1}^{n-1}(n-i)d_i^+ - \binom{n}{3},
\end{equation}
and it is achieved when $D$ is arranged by non-increasing out-degree. The minimum value of an arrangement of $D$ is 
\begin{equation}\label{minTourney}
MinDLA(D) = \sum_{i=1}^{n-1}id_i^+ - \binom{n}{3},
\end{equation}
and it is achieved when $D$ is arranged by non-decreasing out-degree.
\end{theorem}
\begin{proof}
We abstract to W-MaxLA, and solve the equivalent problem of arranging the complete graph on $n$ vertices $K_n$ with vertex weight function $f(v) = d^+(v)$. 

Let $\pi = (u_1,u_2, ..., u_n)$ be any arrangement of $K_n,f$, and let $\pi^+ = (v_1, v_2, ..., v_n)$ be an arrangement of $K_n,f$ by non-increasing out-degree in $D$, that is, so that $f(v_i) \geq f(v_j)$ whenever $i \leq j$. We will show that the $i^{th}$ cut of $\pi$ is no larger than the $i^{th}$ cut of $\pi^+$. That is, we will show that $c_i(\pi) \leq c_i(\pi^+)$.

Observe that the value of the $i^{th}$ cut of $\pi$ is 
\begin{equation*}c_i(\pi)= \sum_{j = 1}^il_\pi(u_j) = \sum_{j = 1}^i(f(v_j) - |N(v_j) \cap S_j|)
\end{equation*}

But $|N(v_j) \cap S_j| = j-1$ because we are arranging a complete graph. Therefore it follows that 

\begin{equation*}c_i(\pi)= \sum_{j=1}^if(v_j) - \frac{1}{2} i(i-1)
\end{equation*}

Now it is easy to see that $\sum_{j=1}^if(v_j) = \sum_{j=1}^id^+(v_j) \leq \sum_{j=1}^id^+_j$ and so 

\begin{equation*}c_i(\pi) \leq \sum_{j=1}^id^+_j - \frac{1}{2} i(i-1) = c_i(\pi^+) .
\end{equation*}

Thus arranging $D$ by non-increasing out-degree achieves the best possible value for every cut. Hence $D$ is nested maximum, and the value of a maximum arrangement of $D$ is given by the sum of its cuts\footnote{Note that we are using the identities $\sum _{i=1}^{n-1}i^2 = \binom{n}{2} + 2\binom{n}{3}$, and $\sum _{i=1}^{n-1}i = \binom{n}{2}$.}, $$MaxDLA(D) = \sum_{i=1}^{n-1}c_i = \sum_{i=1}^{n-1}\Bigg(\sum_{j=1}^id^+_j - \frac{1}{2} i(i-1)\Bigg) = \sum_{i=1}^{n-1}(n-i)d_i^+ - \binom{n}{3}.$$
Because the complement of a tournament is also a tournament, it follows that a tournament is also nested-minimum, and a minimum arrangement of a tournament is achieved when arranged by non-decreasing out-degree. The value of a minimum arrangement is calculated in a similar way to the calculation of a maximum arrangement. 
\end{proof}

It is interesting to note that, for an an Eulerian tournament $T$ on $n$ vertices, every arrangement of $T$ has value $MaxDLA(T) = MinDLA(T) = \frac{1}{2}m(n+1)$. 
Which is (necessarily) equal to the lower bound from Property \ref{minValD_byAverage} on the value of any $n$-vertex, $m$-edge loopless multi-digraph. Therefore, of all the loopless multi-digraphs on $n$ vertices and $m=\binom{n}{2}$ edges, an Eulerian tournament achieves the worst possible value in terms of MaxDLA.
\subsection*{Orientations of Graphs with Maximum Degree Two}


We show that orientations of graphs with maximum degree at most two are nested-maximum. We do this by proving something stronger, that for any undirected graph $G$ with maximum degree at most two, and any weight function $f: V(G) \mapsto \mathbb{Z}$, then $G,f$ has a nested-maximum arrangement. By Proposition \ref{prop: Wmax equiv}, this means that any orientation of $G$ is nested-maximum.

We begin with a lemma showing a condition which, when satisfied, means it is safe to assume one end of an edge comes before the other
in a maximal arrangement of $G,f$.

\begin{lemma}\label{edgePutEndFirst}
Let $G=(V,E)$ be a digraph and let $f: V(G) \mapsto \mathbb{Z}$ be a weight function. If there exists an edge $e=uv \in E(G)$ so that 
\begin{equation}\label{putUbeforeV}
    d(u) -1 \leq f(u) - f(v),
\end{equation}
then for any arrangement $\sigma$ of $G$, there is an arrangement $\pi$ of $G$ so that $\pi(u) < \pi(v)$ and $\sigma \leq \pi$.
\end{lemma}

\begin{proof}
Let $\sigma$ be an arrangement of $G$. We may assume $\sigma$ is a maximal arrangement, and so, by Property \ref{prop: levels non-increasing}, the vertices are arranged by non-increasing level. If $\sigma(u) < \sigma(v)$ then $\sigma$ satisfies the lemma, and we are done. So we may assume $\sigma(v) < \sigma(u)$, and by maximality of $\sigma$, that $l_\sigma(v) \geq l_\sigma(u)$. But if $l_\sigma(v) = l_\sigma(u)$ we may simply interchange $u$ and $v$ in $\sigma$ without altering any cut of $\sigma$. So it must be that 
\begin{equation}\label{v>u}
    l_\sigma(v) > l_\sigma(u).
\end{equation}
But now,
\begin{equation}
    l_\sigma(u) \geq f(u)-d(u) \geq f(v) -1 \geq l_\sigma(v) -1 \geq l_\sigma(u).
\end{equation}
This means 
\begin{equation}
    l_\sigma(v) -1 = l_\sigma(u).
\end{equation} We may permute vertices at the same level in $\sigma$ so that 
\begin{equation}
    \sigma(v) = \sigma(u) -1.
\end{equation}
Now simply interchanging $u$ and $v$ in $\sigma$ doesn't alter any cut. Observe that when $u$ and $v$ are interchanged, their levels are interchanged as well. This new arrangement satisfies the requirements of the lemma.
\end{proof}

\begin{theorem}
Let $G=(V,E)$ be a graph with $\Delta(G) \leq 2$, and $f:V(G) \mapsto \mathbb{Z}$ be a weight function. Then $G$ is nested-maximum.
\end{theorem}
\begin{proof}
We proceed by induction on $m = |E(G)|$. As a base case, assume $|E(G)| = 0$. Then $G$ is just isolated vertices, and so every arrangement of $G$ is necessarily maximum. For the inductive step, assume $|E(G)| \geq 1$. 

If $G$ is not connected, we may take a nested-maximum arrangement of each component of $G$ by induction. By Observation \ref{obs: disconnected}, any maximum arrangement of $G$ restricted to a component of $G$ must be a nested-maximum arrangement of that component. Since, by Property \ref{prop: levels non-increasing}, the levels of a maximal arrangement of $G$ are non-increasing, this means there is only one signature of a maximal arrangement of $G$. Therefore $G$ is nested-maximum. Hence we may assume $G$ is connected, and so $G$ is either a cycle or a path.

Say an edge $e=uv \in E(G)$ is \emph{easy} if for any arrangement $\sigma$ of $G$, there is an arrangement $\pi$ of $G$ so that $\pi(u) < \pi(v)$ and $\sigma \leq \pi$.
First, we show that $G$ always an easy edge.

\emph{Case 1}: Suppose for all edges $e=uv \in E(G)$, it is the case that $f(u) = f(v)$. Then, because $G$ is connected, $f$ is constant. If $G$ is a cycle, then by symmetry every edge is easy. If $G$ is not a cycle, then $G$ is a path. Let $u$ be an end of $G$ and $v$ be the neighbour of $u$. We claim that $e = uv$ is easy. Indeed, $f(u) = f(v)$ by assumption, but $d(u) = 1$. So $d(u) - 1 = 0 = f(u) - f(v)$, and Lemma \ref{edgePutEndFirst}, shows $e$ is easy.

\emph{Case 2}: There exists an edge $e=uv \in E(G)$ such that $f(u) > f(v)$. Then, because $d(u) \leq 2$ it follows that $d(u) - 1 \leq 1 \leq f(u) -f(v)$. Again, by Lemma \ref{edgePutEndFirst} $e$ is easy.

Therefore, in all cases, $G$ has an easy edge $e=uv$, and it is the case that
\begin{equation}\label{fuLessfv}
    f(u) \geq f(v).
\end{equation} 
Now construct a new graph $G' = G-e$, and associate with $G'$ a new weight function $f':V(G) \mapsto \mathbb{Z}$ where 
\begin{equation}\label{functionf'}
      f'(x) = \begin{cases}
f(x) -1 &\text{if $x=v$}\\
f(x) &\text{otherwise.}
\end{cases}
  \end{equation}

By this construction, and Observation \ref{delEdge}, for any arrangement $\pi$ of $G$, where $\pi(u) < \pi(v)$, it is the case that $s_{G,f}(\pi) = s_{G',f'}(\pi)$.


By the induction hypothesis, $G'$ with any weight function is nested-maximum, so let $\pi'$ be a nested-maximum arrangement of $G',f'$. We claim that $l_{\pi'}(u) \geq l_{\pi'}(v)$. Indeed, 
\begin{equation}
    l_{\pi'}(u) \geq f'(u) -d(u)_{G'} \geq f'(u) - 1 \geq f'(v) \geq l_{\pi'}(v),
\end{equation}
where the first and last inequalities follow from the definition of weighted level, the second because $d(u)_{G'} \leq 1$ as $e$ was deleted, and the third from Equations \ref{fuLessfv} and \ref{functionf'}. But this means that we may assume that $\pi'(u) < \pi'(v)$ because the vertices of a maximal arrangement are ordered by non-increasing level, and we may permute vertices within a level.

Now we claim that $\pi'$ is a nested-maximum arrangement of $G$. Suppose not, then there exists an arrangement $\pi''$ so that $\pi''$ beats $\pi'$ on some cut. Since edge $e=uv$ is easy, we may assume $\pi''(u) < \pi''(v)$. But by Observation \ref{delEdge} this means that $s_{G,f}(\pi'') = s_{G',f'}(\pi'')$. Because $\pi'$ is nested-maximum for $G',f'$, it follows that $\pi''$ cannot beat $\pi'$ on any cut. Therefore $\pi'$ is also nested-maximum for $G$.
\end{proof}

\begin{corollary}
Orientations of graphs $G$ where $G$ has maximum degree at most two are nested-maximum.
\end{corollary}

\subsection*{Transitive Acyclic Digraphs}

Finally, we show that every transitive acyclic digraph is both nested-minimum and nested-maximum. Recall that a digraph $D=(V,E)$ is {transitive} when the relation $E$ is transitive. Trivially, transitive acyclic digraphs are nested-minimum because an arrangement where every edge is backward is possible. We show that they are also nested-maximum. Specifically, we show that a nested-maximum arrangement of a transitive acyclic digraph is obtained by ordering the vertices $v$ by non-increasing $d^+(v) - d^-(v)$. 

\begin{theorem}
Let $D=(V,E)$ be a transitive acyclic digraph, and $\pi$ an arrangement of $V(D)$ by non-increasing $d^+(v) - d^-(v)$. Then  $\pi$ is nested-maximum. 
\end{theorem}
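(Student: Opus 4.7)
The plan is to verify that $\pi$ dominates every other arrangement coordinatewise on cut sizes. Writing $w(v) := d^+(v) - d^-(v)$ and $X_i := \{v_1, \ldots, v_i\}$ for the initial segment of $\pi$, I will show that $c_i \geq c_i'$ for every other arrangement $\sigma$ and every index $i$ (here $c_i, c_i'$ denote the $i$-th cut sizes of $\pi$ and $\sigma$). Equivalently, the plan is to establish $|E(X, V \setminus X)| \leq |E(X_i, V \setminus X_i)|$ for every $X \subseteq V$ with $|X| = i$.

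First I will observe that $\pi$ is itself a topological ordering of $D$. If $(u,v) \in E(D)$, transitivity forces $N^+(v) \cup \{v\} \subseteq N^+(u)$ and $N^-(u) \cup \{u\} \subseteq N^-(v)$, hence $w(u) \geq w(v) + 2$, and so $u$ must precede $v$ in $\pi$. A consequence is that each $X_i$ is an order ideal of the underlying partial order: no edges enter $X_i$ from outside, and $|E(X_i, V \setminus X_i)| = \sum_{v \in X_i} w(v)$ is the sum of the $i$ largest weights.

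The heart of the proof will be a swap argument. Given $X$ of size $i$ that fails to be an order ideal, I pick $v \in X$ and $u \notin X$ with $(u,v) \in E$ and let $X' = (X \setminus \{v\}) \cup \{u\}$. I plan to show $|E(X', V \setminus X')| \geq |E(X, V \setminus X)| + 1$ by classifying each remaining vertex $x$ by its edges to $\{u,v\}$ and computing its contribution to the change in cut size. The nestedness $N^-(u) \subseteq N^-(v)$ pairs the edges $(x,u)$ and $(x,v)$ coming from any common in-neighbor $x$ so that their contributions cancel; symmetrically, every common out-neighbor contributes zero via $N^+(v) \subseteq N^+(u)$. Vertices $x$ in $N^-(v) \setminus N^-(u)$, in $N^+(u) \setminus N^+(v)$, or in $N^+(u) \cap N^-(v)$ each contribute non-negatively, and the edge $(u,v)$ itself contributes $+1$; acyclicity rules out the combination $x \in N^-(u) \cap N^+(v)$, which would force a directed $2$-cycle through $\{u,v\}$. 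Iterating the swap (which strictly increases a bounded, integer-valued quantity) will terminate at an order ideal $X^\dagger$ of size $i$ with $|E(X^\dagger, V \setminus X^\dagger)| \geq |E(X, V \setminus X)|$.

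To conclude, I will note that for any order ideal $X^\dagger$ of size $i$ no edges enter $X^\dagger$, so $|E(X^\dagger, V \setminus X^\dagger)| = \sum_{v \in X^\dagger} w(v) \leq \sum_{v \in X_i} w(v) = |E(X_i, V \setminus X_i)|$, where the final inequality uses that $X_i$ is the set of $i$ vertices of largest $w$. Combined with the previous step, this will give $|E(X, V \setminus X)| \leq |E(X_i, V \setminus X_i)|$ for every $X$ of size $i$, so the signature of $\pi$ dominates that of every $\sigma$ and $\pi$ is maximum. The main obstacle will be the case analysis in the swap step: verifying that every class of auxiliary vertex contributes non-negatively, and pinpointing where transitivity and acyclicity each enter.
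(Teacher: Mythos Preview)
Your argument is correct. Both your proof and the paper's hinge on a local swap showing that optimal cuts cannot have backward arcs, and both finish by observing that among ``topological'' configurations the cut size is $\sum_{v\in X} w(v)$, maximized by taking the $i$ vertices of largest $w$. The execution differs: the paper works at the level of arrangements, picking a shortest backward arc $(u,v)$, using transitivity to show $u$ and $v$ have no neighbours strictly between them, swapping their positions, and then invoking Property~2.7 (maximal arrangements are sorted by level) to conclude. You instead work cut by cut, swapping a single vertex across the bipartition and using the nesting $N^-(u)\subseteq N^-(v)$, $N^+(v)\subseteq N^+(u)$ to get a strict gain; this is more self-contained (no appeal to Property~2.7) and makes the role of transitivity and acyclicity fully explicit, at the cost of a slightly longer case analysis. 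Your preliminary observation that $(u,v)\in E$ forces $w(u)\ge w(v)+2$, so that $\pi$ is automatically a topological order and each $X_i$ is already an order ideal, is a clean touch that the paper leaves implicit.
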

\begin{proof}
First we show that every maximal arrangement of $D$ is a topological sort, meaning every edge points forward. We proceed by contrapositive. Let $\pi = (v_1, v_2, ..., v_n)$ be an arrangement of $D$ with a backward edge $e=(v_j, v_i)$. Choose $e$ so that $j-i$ is minimum. Then, by transitivity, $v_i$ and $v_j$ must have no neighbour $v_k$ with $i < k < j$, otherwise our choice would not be minimum. But this means the arrangement $\pi'$ obtained from $\pi$ by interchanging $v_i$ and $v_j$, that is $\pi' = (v_1, ..., v_{i-1}, v_j, v_{i+1}, ..., v_{j-1}, v_i, v_{j+1}, ..., v_n)$, is strictly better than $\pi$. Indeed, cuts $C_i$ to $C_{j-1}$ are strictly increased in $\pi'$ because they include the edge $e$, and all other cuts have the same value in both $\pi$ and $\pi'$. Hence $\pi$ is not maximal, and therefore every maximal arrangement of $D$ is a topological sort.

But this means the level of a vertex $v$ in a maximal arrangement is $d^+(v) - d^-(v)$. Since vertices are arranged by non-increasing level in a maximal arrangement, it follows that every maximal arrangement has the same signature. The theorem follows.
\end{proof}

\bibliographystyle{acm}

\bibliography{bib}

\end{document}